\newtheorem{thm}{Theorem}[section]
\newtheorem{theorem}{Theorem}[section]
\newtheorem{lemma}[thm]{Lemma}
\newtheorem{cor}[thm]{Corollary}
\newtheorem{conj}[thm]{Conjecture}
\newtheorem{prop}[thm]{Proposition}
\newtheorem{defn}[thm]{Definition}
\newtheorem{example}[thm]{Example}
\newtheorem{remark}[thm]{Remark}
\numberwithin{equation}{section}
\begin{document}
\title[Nested Canalyzing Functions And Their Average Sensitivities]{Nested Canalyzing Functions And Their Average Sensitivities}
\author[Yuan Li, John O. Adeyeye, Reinhard Laubenbacher]{Yuan Li$^{1\ast}$, John O. Adeyeye $^{2\ast}$, Reinhard Laubenbacher$^{3}$}
\address{{\small $^{1}$Department of Mathematics, Winston-Salem State University, NC
27110,USA}\\
{\small email: liyu@wssu.edu }\\
$^{2}$Department of Mathematics, Winston-Salem State University, NC 27110,USA,
{\small email: adeyeyej@wssu.edu}\\
{\small $^{3}$Virginia Bioinformatics Institute, Virginia Tech, Blacksburg, VA
24061,USA }\\
{\small email: reinhard@vbi.vt.edu}}
\thanks{$^{\ast}$ Supported by an award from the USA DoD $\#$ W911NF-11-10166}
\keywords{Nested Canalyzing Function, Layer Number, Extended Monomial, Multinomial Coefficient,
Dynamical System, Hamming Weight, Activity, Average Sensitivity. }
\date{}

\begin{abstract}
In this paper, we obtain complete characterization for nested canalyzing
functions (NCFs) by obtaining its unique algebraic normal form (polynomial
form). We introduce a new concept, LAYER NUMBER for NCF. Based on this,  we obtain  explicit formulas for
the the following important parameters: 1) Number of all the nested canalyzing functions, 2) Number of all the NCFs with given LAYER NUMBER, 3) Hamming weight of any NCF, 4) The activity number of any variable of any NCF, 5) The average sensitivity of any NCF.
Based on these formulas, we show the activity number is greater for those variables in out layer and equal in the same layer. We show the average sensitivity attains minimal value when the NCF has only one layer. We also prove the average sensitivity for any NCF (No matter how many variables it has)  is between $0$ and $2$. Hence, theoretically, we show why NCF is stable since a random Boolean function has average sensitivity $\frac{n}{2}$. Finally we conjecture that the NCF attain the maximal average sensitivity if it has the maximal LAYER NUMBER $n-1$. Hence, we guess the uniform upper bound for the average sensitivity of any NCF can be reduced to $\frac{4}{3}$ which is tight.

\end{abstract}
\maketitle

\section{Introduction}

\label{sec-intro} Canalyzing function were introduced by Kauffman \cite{Kau1}
as appropriate rules in Boolean network models or gene regulatory networks.
Canalyzing functions are known to have other important applications in
physics, engineering and biology. In \cite{Mor} it was shown that the dynamics
of a Boolean network which operates according to canalyzing rules is robust
with regard to small perturbations. In \cite{Win2}, W. Just, I. Shmulevich and J. Konvalina 
derived an exact formula for the number of canalyzing functions. In \cite{Yua2}, 
the definition of canalyzing functions was generalized to any finite fields $\mathbb{F}_{q}$,
where $q$ is a power of a prime.  Both the exact formulas and the
asymptotes of the number of the generalized canalyzing functions were obtained.

Nested Canalyzing Functions (NCFs) were introduced recently in \cite{Kau2}.
One important characteristic of (nested) canalyzing functions is that they
exhibit a stabilizing effect on the dynamics of a system. That is, small
perturbations of an initial state should not grow in time and must eventually
end up in the same attractor of the initial state. The stability is typically
measured using so-called Derrida plots which monitor the Hamming distance
between a random initial state and its perturbed state as both evolve over
time. If the Hamming distance decreases over time, the system is considered
stable. The slope of the Derrida curve is used as a numerical measure of
stability. Roughly speaking, the phase space of a stable system has few
components and the limit cycle of each component is short.

In \cite{Kau3}, the authors studied the dynamics of nested canalyzing Boolean
networks over a variety of dependency graphs. That is, for a given random
graph on $n$ nodes, where the in-degree of each node is chosen at random
between $0$ and $k$, where $k\leq n$, a nested canalyzing function is assigned
to each node in terms of the in-degree variables of that node. The dynamics of
these networks were then analyzed and the stability measured using Derrida
plots. It is shown that nested canalyzing networks are remarkably stable
regardless of the in-degree distribution and that the stability increases as
the average number of inputs of each node increases.

An extensive analysis of available biological data on gene regulations (about
150 genes) showed that 139 of them are regulated by canalyzing functions
\cite{Har}. In \cite{Kau3, Nik}, it was shown that 133 of the 139 are in fact
nested canalyzing.

Most published molecular networks are given in the form of a wiring diagram,
or dependency graph, constructed from experiments and prior published
knowledge. However, for most of the molecular species in the network, little
knowledge, if any, could be deduced about their regulatory mechanisms, for
instance in the gene transcription networks in yeast \cite{Herr} and E. Coli
\cite{Bar}. Each one of these networks contains more than 1000 genes. Kauffman
et. al \cite{Kau2} investigated the effect of the topology of a sub-network of
the yeast transcriptional network where many of the transcriptional rules are
not known. They generated ensembles of different models where all models have
the same dependency graph. Their heuristic results imply that the dynamics of
those models which used only nested canalyzing functions were far more stable
than the randomly generated models. Since it is already established that the
yeast transcriptional network is stable, this suggests that the unknown
interaction rules are very likely nested canalyzing functions. In a recent
article \cite{Bal}, the whole transcriptional network of yeast, which has 3459
genes as well as the transcriptional networks of E. Coli (1481 genes) and B.
subtillis (840 genes) have been analyzed in a similar fashion, with similar findings.

These heuristic and statistical results show that the class of nested
canalyzing functions is very important in systems biology. It is shown in
\cite{Jar} that this class is identical to the class of so-called unate
cascade Boolean functions, which has been studied extensively in engineering
and computer science. It was shown in \cite{But} that this class produces the
binary decision diagrams with the shortest average path length. Thus, a more
detailed mathematical study of this class of functions has applications to
problems in engineering as well.

In \cite{Abd2},  the authors provided a
description of nested canalyzing function. As a corollary of the equivalence,
a formula in the literature for the number of unate cascade functions also
provides such a formula the number of nested canalyzing functions. Recently,
in \cite{Mur2}, those results were generalized to the multi-state nested
canalyzing functions on finite fields $\mathbb{F}_{p}$, where $p$ is a prime.
They obtained the formula for the number of the generalized NCFs, as a
recursive relation.

In \cite{Coo}, Cook et al. introduced the notion of sensitivity as a combinatorial 
measure for Boolean functions providing lower bounds on the time needed by CREW PRAM (concurrent read
 , but exclusive write (CREW) parallel random access machine (PRAM)). It was extended by Nisan \cite{Nis} to block
 sensitivity. It is still open whether sensitivity and block sensitivity are polynomially related (they are equal for monotone Boolean 
 functions). Although the definition is straightforward, the sensitivity is understood only for a few classes function. For monotone functions, Ilya Shmulevich \cite {Shm2} derived asymptotic formulas for a typical monotone Boolean functions. Recently, Shengyu Zhang \cite{Zha} find a formula  for the average sensitivity of any monotone Boolean functions, hence, a tight bound is derived.
 In \cite{Shm}, Ilya Shmulevich and Stauart A. Kauffman considered the activities of the variables of Boolean functions with only one 
 canalyzing variable. They obtained the average sensitivity of this kind of Boolean function.

In this paper, we revisit the NCF, obtaining a more explicit characterization
of the Boolean NCFs than those in \cite{Abd2}. We introduce a new concept, the
$LAYER$ $NUMBER$ in order to classify all the  variables. Hence, the
dominance of the variable can be quantified. As a consequence, we
obtain an explicit formula for the number of NCFs. Thus, a nonlinear recursive
relation (the original formula) is solved, which maybe of independent
mathematical interest.
Using our unique algebraic normal form of NCF, for any NCF, we get the formula of activity for its variables.
We show that the variables in a more dominant layer have greater activity number. Variables in the same layer have the same activity 
numbers. 
Consequently, we obtain the formula of any NCF's average sensitivity, its lower bound is $\frac{n}{2^{n-1}}$ and its upper bound is $2$ (No matter what $n$ is) which is much less
than $\frac{n}{2}$, the average sensitivity of a random Boolean function. So, theoretically, we proved why NCF is ``stable''. We also find the formula of  the Hamming weight of each NCF. Finally, we conjecture that the NCF attains its maximal value if it has the maximal LAYER NUMBER $n-1$. Hence, we guess the tight upper bound is $\frac{4}{3}$.
In the next section, we introduce some definitions and notations.

\section{Preliminaries}

\label{2} In this section we introduce the definitions and notations. Let
$\mathbb{F}=\mathbb{F}_{2}$ be the Galois field with $2$ elements. If $f$ is a
$n$ variable function from $\mathbb{F}^{n}$ to $\mathbb{F}$, it is well known
\cite{Lid} that $f$ can be expressed as a polynomial, called the algebraic
normal form(ANF):
\[
f(x_{1},x_{2},\ldots,x_{n})=\bigoplus_{0\leq k_i\leq 1,i=1,\ldots,n}a_{k_{1}k_{2}\ldots k_{n}}{x_{1}}^{k_{1}}{x_{2}}^{k_{2}%
}\cdots{x_{n}}^{k_{n}}%
\]
where each coefficient $a_{k_{1}k_{2}\ldots k_{n}}\in\mathbb{F}$ is a
constant. The number $k_{1}+k_{2}+\cdots+k_{n}$ is the multivariate degree of
the term $a_{k_{1}k_{2}\ldots k_{n}}{x_{1}}^{k_{1}}{x_{2}}^{k_{2}}\cdots
{x_{n}}^{k_{n}}$ with nonzero coefficient $a_{k_{1}k_{2}\ldots k_{n}}$. The
greatest degree of all the terms of $f$ is called the algebraic degree,
denoted by $deg(f)$.

\begin{defn}
\label{def2.1} $f(x_{1},x_{2},\ldots, x_{n})$ is essential in variable $x_{i}$
if there exist $r, s\in\mathbb{F}$ and $x_{1}^{*},\ldots,x_{i-1}^{*}$
$,x_{i+1}^{*},\ldots, x_{n}^{*}$ such that $f(x_{1}^{*},\ldots,x_{i-1}%
^{*},r,x_{i+1}^{*},\ldots,x_{n}^{*})\neq f(x_{1}^{*},\ldots,x_{i-1}%
^{*},s,x_{i+1}^{*},\ldots,x_{n}^{*})$.
\end{defn}

\begin{defn}
\label{def2.2} A function $f(x_{1},x_{2},\ldots,x_{n})$ is $<i:a:b>$
canalyzing if

$f(x_{1},\ldots,x_{i-1},a,x_{i+1},\ldots,x_{n})=b$, for all $x_{j}$, $j\neq
i$, where $i\in\{1,\dots,n\}$, $a$,$b\in\mathbb{F}$.
\end{defn}

The definition is reminiscent of the concept of "canalisation" introduced by
the geneticist C. H. Waddington \cite{Wad} to represent the ability of a
genotype to produce the same phenotype regardless of environmental variability.

\begin{defn}
\label{def2.3} Let $f$ be a Boolean function in $n$ variables. Let $\sigma$ be
a permutation on $\{1,2,\ldots,n\}$. The function $f$ is nested canalyzing
function (NCF) in the variable order

$x_{\sigma(1)},\ldots,x_{\sigma(n)}$ with canalyzing input values
$a_{1},\ldots,a_{n}$ and canalyzed values $b_{1},\ldots,b_{n}$, if it can be
represented in the form

$f(x_{1},\ldots,x_{n})=\left\{
\begin{array}
[c]{ll}%
b_{1} & x_{\sigma(1)}=a_{1},\\
b_{2} & x_{\sigma(1)}= \overline{ a_{1}}, x_{\sigma(2)}=a_{2},\\
b_{3} & x_{\sigma(1)}= \overline{ a_{1}}, x_{\sigma(2)}= \overline{ a_{2}},
x_{\sigma(3)}=a_{3},\\
.\ldots. & \\
b_{n} & x_{\sigma(1)}= \overline{ a_{1}}, x_{\sigma(2)}= \overline{ a_{2}%
},\ldots,x_{\sigma(n-1)}= \overline{ a_{n-1}}, x_{\sigma(n)}=a_{n},\\
\overline{b_{n}} & x_{\sigma(1)}= \overline{ a_{1}}, x_{\sigma(2)}= \overline{
a_{2}},\ldots,x_{\sigma(n-1)}= \overline{ a_{n-1}}, x_{\sigma(n)}=\overline{
a_{n}}.
\end{array}
\right. $

Where $\overline{a}=a\oplus 1$.The function f is nested canalyzing if f is nested
canalyzing in the variable order $x_{\sigma(1)},\ldots,x_{\sigma(n)}$ for some
permutation $\sigma$.
\end{defn}

Let $\alpha=(a_{1},a_{2},\ldots,a_{n})$ and $\beta=(b_{1},b_{2},\ldots,b_{n}%
)$, we say $f$ is $\{\sigma:\alpha:\beta\}$ NCF if it is NCF in the variable
order $x_{\sigma(1)},\ldots,x_{\sigma(n)}$ with canalyzing input values
$\alpha=(a_{1},\ldots,a_{n})$ and canalyzed values $\beta=(b_{1},\ldots
,b_{n})$.

Given vector $\alpha=(a_{1},a_{2},\ldots,a_{n})$, we define $\alpha
^{i_{1},\ldots,i_{k}}=(a_{1},\ldots,\overline{a_{i_{1}}},\ldots,\overline
{a_{i_{k}}},\ldots,a_{n})$

From the above definition, we immediately have the following

\begin{prop}
$f$ is $\{\sigma:\alpha:\beta\}$ NCF  $\Longleftrightarrow$ $f$ is
$\{\sigma:\alpha^{n}:\beta^{n}\}$ NCF
\end{prop}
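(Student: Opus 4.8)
The plan is to unfold both directions of the equivalence directly from Definition~\ref{def2.3} and show that the two piecewise prescriptions define literally the same function on $\mathbb{F}^n$, so that the biconditional is immediate. First I would record that the case-split defining an NCF partitions $\mathbb{F}^n$ into $n+1$ disjoint regions. For $1\le k\le n-1$ the region
\[
R_k=\{x: x_{\sigma(1)}=\overline{a_1},\ldots,x_{\sigma(k-1)}=\overline{a_{k-1}},\,x_{\sigma(k)}=a_k\}
\]
carries the constant value $b_k$, while the remaining ``deepest'' region $D=\{x: x_{\sigma(i)}=\overline{a_i}\text{ for all }1\le i\le n-1\}$ is split according to the value of $x_{\sigma(n)}$, with $f=b_n$ on the part where $x_{\sigma(n)}=a_n$ and $f=\overline{b_n}$ on the part where $x_{\sigma(n)}=\overline{a_n}$.

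The key observation is that $\alpha$ and $\alpha^n$ agree in their first $n-1$ coordinates, and likewise $\beta$ and $\beta^n$; consequently the regions $R_1,\ldots,R_{n-1}$ and the values $b_1,\ldots,b_{n-1}$ assigned on them are \emph{identical} for the $\{\sigma:\alpha:\beta\}$ prescription and the $\{\sigma:\alpha^n:\beta^n\}$ prescription. Moreover the deepest region $D$ is the same set in both cases, since its description uses only $a_1,\ldots,a_{n-1}$. Thus the two candidate functions can differ only in how they assign values on the two halves of $D$, and it suffices to check that they agree there.

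On $D$ the $\{\sigma:\alpha^n:\beta^n\}$ NCF has last canalyzing input $\overline{a_n}$ and last canalyzed value $\overline{b_n}$, so by Definition~\ref{def2.3} it equals $\overline{b_n}$ where $x_{\sigma(n)}=\overline{a_n}$ and equals $\overline{\overline{b_n}}=b_n$ where $x_{\sigma(n)}=\overline{\overline{a_n}}=a_n$. Using the involution $\overline{\overline{c}}=c$ in $\mathbb{F}_2$, this matches exactly the assignment of the $\{\sigma:\alpha:\beta\}$ NCF on both halves of $D$, namely $b_n$ where $x_{\sigma(n)}=a_n$ and $\overline{b_n}$ where $x_{\sigma(n)}=\overline{a_n}$. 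Hence the two functions coincide on all of $\mathbb{F}^n$, which gives both implications at once.

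The argument is a direct verification, so I do not expect a substantive obstacle; the only point demanding care is the bookkeeping of the double complements $\overline{\overline{a_n}}=a_n$ and $\overline{\overline{b_n}}=b_n$, ensuring that flipping the last input and flipping the last canalyzed value swap the two branches of $D$ in a mutually cancelling way.
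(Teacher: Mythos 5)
Your proof is correct and is exactly the verification the paper has in mind: the paper offers no written proof, stating only that the proposition follows "immediately" from Definition~\ref{def2.3}, and your region-by-region check (with the double-complement cancellation on the deepest region) is the natural way to make that immediacy explicit.
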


\begin{example}
\label{exa2.1} $f(x_{1},x_{2},x_{3})=x_{1}(x_{2}\oplus 1)x_{3}\oplus 1$ is
$\{(1,2,3):(0,1,0):(1,1,1)\}$ NCF.

Actually, one can check this function is nested canalyzing in any variable order.
\end{example}

\begin{example}
\label{exa2.2} $f(x_{1},x_{2},x_{3})=(x_{1}\oplus 1)(x_{2}(x_{3}\oplus 1)\oplus 1)\oplus 1$. This
function is

$\{(1,2,3):(1,0,1):(1,0,0)\}$ NCF. It is also $\{(1,3,2):(1,1,1):(1,0,1)\}$ NCF.

One can check this function can be nested canalyzing in only two variable
orders $(x_{1},x_{2},x_{3})$ and $(x_{1},x_{3},x_{2})$.
\end{example}

From the above definitions, we know a function is NCF, all the $n$ variable
must be essential. However, a constant function $b$ can be  $<i:a:b>$
canalyzing for any $i$ and $a$.

\section{A Complete Characterization for NCF}

\label{3}

In \cite{Lor}, the author introduced Partially Nested Canalyzing Functions
(PNCFs), a generalization of the NCFs, and the nested canalyzing depth, which
measures the extent to which it retains a nested canalyzing structure. In
\cite{Win}, the author introduced the extended monomial system.

As we will see, in a Nested Canalyzing Function, some variables are more
dominant than the others. We will classify all the variables of a NCF into
different levels according to the extent of their dominance. Hence, we will
give description about NCF with more detail. Actually, we will obtain clearer
description about NCF by introducing a new concept: LAYER NUMBER. As a by-product, we also obtain some enumeration
results. Eventually, we will find an explicit formula of the number of all the NCFs.

First, we have

\begin{defn}
\label{def3.1} \cite{Win} $M(x_{1},\ldots,x_{n})$ is an extended monomial of
essential variables $x_{1},\ldots,x_{n}$ if $M(x_{1},\ldots,x_{n}%
)=(x_{1}\oplus a_{1})(x_{2}\oplus a_{2})...(x_{n}\oplus a_{n})$, where $a_{i}\in\mathbb{F}_{2}$.
\end{defn}

Basically, we will rewrite Theorem 3.1 in \cite{Abd2} with more information.

\begin{lemma}
\label{lm3.1}  $f(x_{1},x_{2},...x_{n})$ is $<i:a:b>$ canalyzing iff

$f(X)=f(x_{1},x_{2},...,x_{n})=(x_{i}\oplus a)Q(x_{1},\ldots, x_{i-1},x_{i+1}\ldots
x_{n})\oplus b$.
\end{lemma}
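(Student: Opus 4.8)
The plan is to prove the equivalence in Lemma~\ref{lm3.1} by establishing both directions directly from the definition of an $<i:a:b>$ canalyzing function. The claim is that $f$ is $<i:a:b>$ canalyzing if and only if $f$ factors as $f(X)=(x_i\oplus a)\,Q(x_1,\ldots,x_{i-1},x_{i+1},\ldots,x_n)\oplus b$ for some Boolean function $Q$ of the remaining $n-1$ variables. The backward direction is the easy one: if $f$ has the stated form, then substituting $x_i=a$ gives $x_i\oplus a=a\oplus a=0$, so the product term vanishes and $f=0\cdot Q\oplus b=b$ regardless of the values of the other variables. This is exactly Definition~\ref{def2.2}, so the ``$\Longleftarrow$'' direction is immediate.

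For the forward direction, suppose $f$ is $<i:a:b>$ canalyzing. The natural approach is to use the Shannon-type cofactor decomposition of a Boolean function along the variable $x_i$. Over $\mathbb{F}_2$ one may write
\[
f(x_1,\ldots,x_n)=(x_i\oplus a)\,g(x_1,\ldots,\widehat{x_i},\ldots,x_n)\oplus (x_i\oplus a\oplus 1)\,h(x_1,\ldots,\widehat{x_i},\ldots,x_n),
\]
where $g$ is the cofactor of $f$ at $x_i=\overline{a}$ and $h$ is the cofactor at $x_i=a$; here $\widehat{x_i}$ indicates that $x_i$ is omitted. One checks that setting $x_i=a$ makes the first indicator $x_i\oplus a$ vanish and the second $x_i\oplus a\oplus 1$ equal to $1$, so $f|_{x_i=a}=h$. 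The canalyzing hypothesis says precisely that $f|_{x_i=a}=b$ as a constant function of the remaining variables, hence $h\equiv b$. Substituting $h\equiv b$ back and setting $Q:=g\oplus b$ (so that $g=Q\oplus b$) yields, after expanding $(x_i\oplus a\oplus 1)\,b=(x_i\oplus a)\,b\oplus b$ and collecting the $(x_i\oplus a)$ terms,
\[
f=(x_i\oplus a)\,(g\oplus b)\oplus b=(x_i\oplus a)\,Q(x_1,\ldots,\widehat{x_i},\ldots,x_n)\oplus b,
\]
which is the desired factored form.

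\emph{The main subtlety} to handle carefully is the arithmetic in the cofactor expansion: one must verify that the two indicator polynomials $x_i\oplus a$ and $x_i\oplus a\oplus 1$ really do act as the two complementary ``selectors'' (one vanishing and the other equal to $1$ at each of $x_i=a$ and $x_i=\overline a$), and that the collapse of the $b$-terms is done correctly over $\mathbb{F}_2$ where $2b=0$. There is nothing deep here, but it is the only place where a sign or parity slip could occur, so I would write out the two cases $x_i=a$ and $x_i=\overline a$ explicitly to confirm the identity before concluding. With both directions in hand the equivalence follows, and I note in passing that this lemma is the single-variable base case underlying the full nested structure: iterating it on $Q$ produces the layered factorization of a general NCF that the paper goes on to exploit.
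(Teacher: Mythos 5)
Your proof is correct and follows essentially the same route as the paper: both decompose $f$ along the variable $x_i$ (the paper writes $f=x_ig_1(X_i)\oplus g_0(X_i)$ from the ANF and regroups, while you use the equivalent Shannon cofactor expansion $f=(x_i\oplus a)g\oplus(x_i\oplus a\oplus 1)h$), and both then use the canalyzing hypothesis to force the residual term to be the constant $b$. The only cosmetic difference is that the paper's $Q$ is the cofactor difference $g_1=f|_{x_i=0}\oplus f|_{x_i=1}$ whereas yours is $f|_{x_i=\overline a}\oplus b$; these coincide, and your $\mathbb{F}_2$ bookkeeping is accurate.
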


\begin{proof}
From the algebraic normal form of $f$, we rewrite it as $f=x_{i}g_{1}%
(X_{i})\oplus g_{0}(X_{i})$,  where $X_{i}=(x_{1},\ldots,x_{i-1},x_{i+1}%
,\ldots,x_{n})$. Hence,  $f(X)=f(x_{1},x_{2},\ldots,x_{n})$ $=(x_{i}%
\oplus a)g_{1}(X_{i})\oplus ag_{1}(X_{i})\oplus g_{0}(X_{i})$. Let $g_{1}(X_{i})=Q(x_{1},\ldots,
x_{i-1},x_{i+1}\ldots x_{n})$ and $r(X_{i})=ag_{1}(X_{i})\oplus g_{0}(X_{i})$. Then
$f(X)=f(x_{1},\ldots,x_{n})=(x_{i}\oplus a)Q(x_{1},\ldots, x_{i-1},x_{i+1}\ldots
x_{n})\oplus r(X_{i})$

Since $f(X)$ is $<i:a:b>$ canalyzing, we get $f(X)=f(x_{1},...x_{i-1}%
,a,x_{i+1},\ldots,x_{n})=b$ for any $x_{1},\ldots, x_{i-1},x_{i+1}\ldots
x_{n}$, i.e., $r(X_{i})=b$ for any $X_{i}$. So $r(X_{i})$ must be the constant
$b$. We finished the necessity. The sufficiency is obvious.
\end{proof}

\begin{remark}\label{remark1}
\label{re1} 1) When we contrast this lemma to the first part of Theorem 3.1 in
\cite{Abd2},we make clear that here,  the $x_{i}$ is not essential in $Q$. 2)
In \cite{Yua2}, there is a general version of this Lemma over any finite fields. 
3) In the above lemma, if $f$ is constant, then
$Q=0$.
\end{remark}

From Definition \ref{def2.3}, we have the following

\begin{prop}
\label{prop3.1} If $f(x_{1},\ldots,x_{n})$ is $\{\sigma:\alpha:\beta\}$ NCF,
i.e., if it is NCF in the variable order

$x_{\sigma(1)},\ldots,x_{\sigma(n)}$ with canalyzing input values
$\alpha=(a_{1},\ldots,a_{n})$ and canalyzed values $\beta=(b_{1},\ldots
,b_{n})$.

Then, for $1\leq k\leq n-1$, let $x_{\sigma(1)}=\overline{a_{1}}%
,\ldots,x_{\sigma(k)}=\overline{a_{k}}$, then the function

$f(x_{1},\ldots,\overset{\sigma(1)}{\overline{a_{1}}},\ldots,\overset
{\sigma(k)}{\overline{a_{k}}},\ldots, x_{n})$ is $\{\sigma^{*}:\alpha
^{*}:\beta^{*}\}$ NCF on those remaining variables, where $\sigma^{*}
=x_{\sigma(k+1)},\ldots,x_{\sigma(n)}$, $\alpha^{*}=(a_{k+1},\ldots,a_{n})$
and $\beta^{*}=(b_{k+1},\ldots,b_{n})$.
\end{prop}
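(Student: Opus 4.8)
The plan is to argue directly from the piecewise definition of an NCF in Definition \ref{def2.3}, substituting the prescribed values $x_{\sigma(1)}=\overline{a_1},\ldots,x_{\sigma(k)}=\overline{a_k}$ and reading off which branches survive. The defining formula for $f$ lists $n+1$ mutually exclusive and exhaustive branches: the $j$-th branch (for $1\le j\le n$) is activated exactly when $x_{\sigma(1)}=\overline{a_1},\ldots,x_{\sigma(j-1)}=\overline{a_{j-1}}$ and $x_{\sigma(j)}=a_j$, while the last branch (value $\overline{b_n}$) is activated when all $n$ variables take their non-canalyzing values. The key observation is that this structure is nested: reaching the $j$-th branch requires the first $j-1$ variables, in the order $\sigma$, to have already taken their $\overline{a_i}$ values.

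First I would show that, once $x_{\sigma(1)}=\overline{a_1},\ldots,x_{\sigma(k)}=\overline{a_k}$ is imposed, none of the first $k$ branches can be active. Indeed, the $j$-th branch for $1\le j\le k$ requires $x_{\sigma(j)}=a_j$, which contradicts the substituted value $x_{\sigma(j)}=\overline{a_j}$ (recall $\overline{a_j}=a_j\oplus 1\ne a_j$ in $\mathbb{F}_2$). Hence the restricted function takes values only on the branches indexed by $k+1,\ldots,n$ together with the final branch.

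Next I would observe that on each surviving branch the first $k$ nesting conditions $x_{\sigma(1)}=\overline{a_1},\ldots,x_{\sigma(k)}=\overline{a_k}$ are now automatically satisfied by the substitution, so they may be deleted from the branch conditions. Thus the branch yielding $b_j$ (for $k+1\le j\le n$) is activated precisely when $x_{\sigma(k+1)}=\overline{a_{k+1}},\ldots,x_{\sigma(j-1)}=\overline{a_{j-1}},x_{\sigma(j)}=a_j$, and the final branch (value $\overline{b_n}$) when $x_{\sigma(k+1)}=\overline{a_{k+1}},\ldots,x_{\sigma(n)}=\overline{a_n}$. Comparing this truncated branch list with Definition \ref{def2.3} applied to the $n-k$ variables $x_{\sigma(k+1)},\ldots,x_{\sigma(n)}$ shows that the restricted function is exactly $\{\sigma^*:\alpha^*:\beta^*\}$ NCF, with $\sigma^*=x_{\sigma(k+1)},\ldots,x_{\sigma(n)}$, $\alpha^*=(a_{k+1},\ldots,a_n)$ and $\beta^*=(b_{k+1},\ldots,b_n)$, as claimed.

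There is no serious obstacle here; the statement is an immediate unwinding of the nested definition, and it is this self-similarity of the NCF template under stripping off outer layers that will later make inductive arguments on the LAYER NUMBER possible. The only point requiring care is the index bookkeeping: matching the truncated branch list against the NCF template for the smaller variable set, and confirming that the canalyzed value of the very last branch remains $\overline{b_n}$ rather than being reset.
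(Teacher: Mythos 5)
Your argument is correct and matches the paper's intent exactly: the paper offers no written proof, simply asserting that the proposition follows ``From Definition \ref{def2.3},'' and your branch-by-branch unwinding is precisely the routine verification being left to the reader. Nothing further is needed.
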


\begin{defn}
\label{def3.2} If $f(x_{1},\ldots,x_{n})$ is a NCF. We call variable $x_{i}$
the most dominant  variable of $f$, if there is an order
$\alpha=(x_{i},\ldots)$ such that $f$ is NCF with this variable order(In other
words, if $f$ is also $<i:a:b>$ canalyzing for some $a$ and $b$).
\end{defn}

In Example \ref{exa2.1}, all the three variables are most dominant, in Example
\ref{exa2.2}, only $x_{1}$ is the most dominant  variable. We have

\begin{theorem}\label{th3.1} 
Given NCF $f(x_{1},\ldots,x_{n})$, all the variables are most
dominant iff

$f=M(x_{1},\ldots,x_{n})\oplus b$, where $M$ is an extended monomial, i.e.,

$M=(x_{1}\oplus a_{1})(x_{2}\oplus a_{2})...(x_{n}\oplus a_{n})$.
\end{theorem}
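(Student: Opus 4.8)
The plan is to prove both directions of the biconditional, with the key tool being the repeated application of Lemma~\ref{lm3.1} together with Proposition~\ref{prop3.1}.

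For the sufficiency direction ($\Leftarrow$), suppose $f = (x_1 \oplus a_1)(x_2 \oplus a_2)\cdots(x_n \oplus a_n) \oplus b$. I first observe that for any index $i$, I can factor out the $i$-th linear factor and write
\[
f = (x_i \oplus a_i)\,\Bigl(\prod_{j \neq i}(x_j \oplus a_j)\Bigr) \oplus b,
\]
which by Lemma~\ref{lm3.1} exhibits $f$ as $<i:a_i:b>$ canalyzing. Since this works for every $i$, any variable can be placed first; and by Definition~\ref{def3.2} each $x_i$ is most dominant. I should also confirm that $f$ really is an NCF (not merely canalyzing in one variable): this follows by induction, since after setting $x_i = \overline{a_i}$ the remaining function is again an extended monomial plus a constant in one fewer variable, so by the inductive hypothesis it is NCF in any order of the remaining variables. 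Thus $f$ is NCF, and every variable is most dominant.

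For the necessity direction ($\Rightarrow$), which I expect to be the main obstacle, suppose $f$ is an NCF in which every variable is most dominant. The strategy is induction on $n$. Because $x_n$ (say) is most dominant, $f$ is $<n:a_n:b_n>$ canalyzing for suitable $a_n, b_n$, so Lemma~\ref{lm3.1} gives $f = (x_n \oplus a_n)\,Q \oplus b_n$ where $Q$ is a function of $x_1,\ldots,x_{n-1}$ in which $x_n$ does not appear. The heart of the argument is to show that $Q$ itself is an extended monomial in $x_1,\ldots,x_{n-1}$ (with no added constant). Setting $x_n = \overline{a_n}$ in $f$ yields the restricted function $Q \oplus b_n$, which by Proposition~\ref{prop3.1} is an NCF on the remaining $n-1$ variables; I then must argue that every one of $x_1,\ldots,x_{n-1}$ is still most dominant in this restriction so that the inductive hypothesis applies and forces $Q \oplus b_n = M'(x_1,\ldots,x_{n-1}) \oplus b'$ for an extended monomial $M'$ and constant $b'$, i.e. $Q = M' \oplus (b' \oplus b_n)$.

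The delicate point — and where I expect to spend the most care — is ruling out the extra constant so that $Q = M'$ exactly, giving $f = (x_n \oplus a_n)M' \oplus b_n$ as a genuine extended monomial plus $b$. To see that $Q$ has no constant term, I would use the hypothesis that variables other than $x_n$ are also most dominant in $f$: if, say, $Q = M' \oplus 1$, then $f = (x_n \oplus a_n)(M' \oplus 1) \oplus b_n$, and I would check that this form fails to be $<j:a:b>$ canalyzing for an index $j$ appearing in $M'$ (fixing $x_j$ cannot make $f$ constant because the surviving dependence on $x_n$ persists), contradicting most-dominance of $x_j$. Equivalently, one can track the algebraic normal form directly: most-dominance of every variable forces the top-degree monomial $x_1 x_2 \cdots x_n$ to appear with coefficient $1$, and then matching lower-degree terms pins down $f$ as $\prod_i (x_i \oplus a_i) \oplus b$. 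Assembling the factored form then completes the induction and hence the proof.
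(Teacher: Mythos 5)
Your proof is correct, but it takes a genuinely different route from the paper's. The paper handles the necessity by a divisibility argument: after writing $f=(x_1\oplus a_1)Q\oplus b$ via Lemma~\ref{lm3.1}, it observes that the canalyzed value of every most dominant variable is the \emph{same} $b$ (evaluate $f$ at the common point where $x_1=a_1$ and $x_2=a_2$ to get $b=b'$), so each linear factor $(x_i\oplus a_i)$ divides the single polynomial $f\oplus b$; since these factors are pairwise coprime, their product divides $f\oplus b$, and one peels off one factor at a time by induction. This "common canalyzed value" trick is what lets the paper avoid your delicate point entirely: there is never a stray constant to rule out. Your route instead restricts $x_n=\overline{a_n}$, invokes Proposition~\ref{prop3.1} plus the inductive hypothesis to get $Q=M'\oplus c$, and then must separately kill the case $c=1$. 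That step does work as you sketch it --- for $n\ge 3$, if $f=(x_n\oplus a_n)(M'\oplus 1)\oplus b_n$ then fixing any variable $x_j$ of $M'$ to either value leaves $f$ depending on $x_n$ (the factor $M'\oplus 1$ becomes either $1$ or a nonzero function), so $x_j$ is not canalyzing, a contradiction --- but note that for $n=2$ the claim is literally false ($M'\oplus 1$ is then itself an extended monomial), so that case must be absorbed into the base case. You should also actually write out the one-line verification that most-dominance survives the restriction (the two partial evaluations commute, so $(Q\oplus b_n)|_{x_j=a_j}=b_j$); you flag it but do not prove it. On balance the paper's argument is shorter and constant-free, while yours is a more standard restriction-and-induct argument that leans harder on Proposition~\ref{prop3.1}; both are sound.
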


\begin{proof}
$x_{1}$ is the most dominant, from Lemma \ref{lm3.1}, we know there exist
$a_{1}$ and $b$ such that

$f(x_{1},x_{2},\ldots,x_{n})=(x_{1}\oplus a_{1})Q(x_{2},\ldots, x_{n})\oplus b$, i.e.,
$(x_{1}\oplus a_{1})|(f\oplus b)$. Now, $x_{2}$ is also the most dominant, we have $a_{2}$
and $b^{\prime}$ such that

$f(x_{1},a_{2},x_{3},\ldots,x_{n})=b^{\prime}$ for any $x_{1},x_{3}%
,\ldots,x_{n}$. Specifically, let $x_{1}=a_{1}$, we get

$f(a_{1},a_{2},x_{3},\ldots,x_{n})=b=b^{\prime}$. Hence, we also get
$(x_{2}\oplus a_{2})|(f\oplus b)=(x_{1}\oplus a_{1})Q(x_{2},\ldots, x_{n})$, since $x_{1}\oplus a_{1}$
and $x_{2}\oplus a_{2}$ are coprime, we get $(x_{2}\oplus a_{2})|Q(x_{2},\ldots, x_{n})$,
hence, $f(x_{1},x_{2},\ldots,x_{n})=(x_{1}\oplus a_{1})(x_{2}\oplus a_{2})Q^{\prime}%
(x_{3},\ldots, x_{n})\oplus b$. With induction principle, the necessity is proved.
The sufficiency if evident.
\end{proof}

We are ready to prove the following main result of this section.

\begin{theorem}\label{th2}
\label{th3.2} Given $n\geq2$, $f(x_{1},x_{2},\ldots,x_{n})$ is nested
canalyzing iff it can be uniquely written as
\begin{equation}\label{eq3.1}
f(x_{1},x_{2},\ldots,x_{n})=M_{1}(M_{2}(\ldots(M_{r-1}%
(M_{r}\oplus 1)\oplus 1)\ldots)\oplus 1)\oplus b.
\end{equation}
Where each $M_{i}$ is an extended monomial of a set of disjoint variables.
More precisely, $M_{i}=\prod_{j=1}^{k_{i}}(x_{i_{j}}\oplus a_{i_{j}})$,
$i=1,\ldots,r$, $k_{i}\geq1$ for $i=1,\ldots,r-1$, $k_{r}\geq2$, $k_{1}%
\oplus \ldots \oplus k_{r}=n$, $a_{i_{j}}\in\mathbb{F}_{2}$, $\{i_{j}|j=1,\ldots,k_{i},
i=1,\ldots,r\}=\{1,\ldots,n\}$.
\end{theorem}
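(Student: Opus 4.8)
The plan is to prove both directions with the real work carried by an induction on $n$, using Lemma \ref{lm3.1} to peel off a single dominant variable and Proposition \ref{prop3.1} to recognize the restricted function as a smaller NCF. For sufficiency I would start from the nested expression and order the variables layer by layer (the variables of $M_1$ first, then those of $M_2$, and so on, in any internal order). Feeding a canalyzing value $a_{i_j}$ into a variable of $M_i$ after all earlier layers have been set to their non-canalyzing values $\overline{a}$ forces $M_1=\cdots=M_{i-1}=1$ and $M_i=0$, which collapses $f$ to a constant; one checks this constant runs $b,\overline b,b,\dots$ from layer to layer, since each descent through a factor $(\,\cdot\,\oplus 1)$ flips the outer constant, and $k_r\ge 2$ guarantees the last variable in the order genuinely separates $b_n$ from $\overline{b_n}$. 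Matching this against Definition \ref{def2.3} shows $f$ is an NCF; this direction is routine bookkeeping of the $\oplus 1$'s.

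For necessity (existence of the form) I would induct on $n$, the base case $n=2$ handled directly: computing the ANF of a two-variable NCF in each of the possible canalyzed-value patterns always yields the shape $(x_1\oplus a_1')(x_2\oplus a_2')\oplus b$, that is $r=1,\ k_1=2$. For $n\ge 3$, since $x_{\sigma(1)}$ is most dominant, Lemma \ref{lm3.1} gives $f=(x_{\sigma(1)}\oplus a_1)Q_1\oplus b_1$ with $x_{\sigma(1)}$ absent from $Q_1$. Restricting to $x_{\sigma(1)}=\overline{a_1}$ yields $g:=Q_1\oplus b_1$, which by Proposition \ref{prop3.1} is an NCF on the remaining $n-1\ge 2$ variables; the induction hypothesis writes $g=M'(T)\oplus b'$ in nested form, with $M'$ its first-layer monomial and $T$ the nested tail. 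Then $Q_1=g\oplus b_1=M'T\oplus(b'\oplus b_1)$, and there are two cases. If $b'=b_1$ then $Q_1=M'T$ and I absorb the new factor by setting $M_1:=(x_{\sigma(1)}\oplus a_1)M'$, so $f=M_1(T)\oplus b_1$ keeps the layer count and enlarges the top layer. If $b'=\overline{b_1}$ then $b'\oplus b_1=1$, so $Q_1=M'T\oplus 1$ and $f=(x_{\sigma(1)}\oplus a_1)\big(M'(\cdots)\oplus 1\big)\oplus b_1$, i.e. I prepend a new singleton layer $M_1=(x_{\sigma(1)}\oplus a_1)$. In either branch the last layer of $g$ is untouched, so $k_r\ge 2$ is inherited, and disjointness, the partition condition, and $k_i\ge 1$ are immediate.

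For uniqueness I would show the representation data are forced by $f$. The key point is that the first-layer variables are exactly the most dominant variables of $f$ (Definition \ref{def3.2}): setting any $x_{1_j}=a_{1_j}$ kills $M_1$ and gives $f=b$, so each layer-one variable is canalyzing; conversely a variable appearing in some $M_i$ with $i\ge 2$ cannot be canalyzing, since forcing its factor to vanish kills $M_i$ but leaves a function still depending on the outer layers (in particular on the $k_1\ge 1$ variables of $M_1$), hence non-constant, while the opposite input leaves $f$ dependent on the inner layers. Thus the set $M_1$, the constant $b$, and the inputs $a_{1_j}$ are intrinsic; restricting all of layer one to its non-canalyzing values produces an intrinsically determined smaller NCF whose form is unique by induction, pinning down $M_2,\dots,M_r$.

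I expect the genuine obstacle to be the constant-tracking in the inductive step together with the exact role of $k_r\ge 2$. A final layer of size one would satisfy $(x\oplus a)\oplus 1=x\oplus\overline{a}$ and could be merged into the preceding layer, so $k_r\ge 2$ is precisely the normalization that removes this ambiguity; it is what makes the existence construction land in the stated class and what lets the uniqueness argument close. Verifying that neither branch of the induction ever manufactures a size-one final layer, and that no alternative representation can either, is the delicate part — everything else is direct computation in $\mathbb{F}_2$.
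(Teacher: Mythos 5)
Your proof is correct, and it rests on the same foundations as the paper's (induction on $n$, Lemma \ref{lm3.1} to factor out a dominant variable, Proposition \ref{prop3.1} to recognize the restriction as a smaller NCF), but the inductive step is organized differently. The paper collects \emph{all} most dominant variables into $M_1$ in one stroke via the coprimality argument of Theorem \ref{th3.1}, writes $f=M_1g\oplus b$, applies the induction hypothesis to $g$ on $n-k_1$ variables, and then forces the inner constant to be $1$ by observing that otherwise the variables of $M_2$ would also be most dominant, contradicting the maximality of the chosen set. You instead peel off a single dominant variable, induct on the $(n-1)$-variable restriction, and decide by comparing constants ($b'=b_1$ versus $b'\ne b_1$) whether to merge the new factor into the existing first layer or to prepend a singleton layer; since, as you essentially observe, $b'=b_1$ holds exactly when $f$ has further most dominant variables, the two branches land on the same canonical form. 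Your version avoids re-running the divisibility argument at each stage but pays with the merge/prepend bookkeeping; the paper's version reaches the canonical layer structure directly. Your uniqueness argument (first layer $=$ set of most dominant variables, hence $b$ and the $a_{1_j}$ are intrinsic, then induct on the restriction) is in fact more explicit than the paper's, which only verifies uniqueness at $n=2$ and leaves the inductive propagation implicit. One small correction: $k_r\geq 2$ plays no role in the sufficiency direction --- a size-one last layer would still define a nested canalyzing function --- it is purely the normalization that makes the representation unique, as your closing paragraph correctly identifies.
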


\begin{proof}
We use induction on $n$.

When $n=2$, there are 16 boolean functions, 8 of them are NCFs, Namely

$(x_{1}\oplus a_{1})(x_{2}\oplus a_{2})\oplus c=M_{1}\oplus 1\oplus b$, where $b=1\oplus c$ and $M_{1}%
=(x_{1}\oplus a_{1})(x_{2}\oplus a_{2})$.

If $(x_{1}\oplus a_{1})(x_{2}\oplus a_{2})\oplus c=(x_{1}\oplus{a_{1}}^{\prime})(x_{2}\oplus{a_{2}%
)}^{\prime}\oplus c^{\prime}$, by equating the coefficients, we immediately obtain
$a_{1}={a_{1}}^{\prime}$, $a_{2}={a_{2}}^{\prime}$ and $c=c^{\prime}$. So,
uniqueness is true.

We have proved that equation \ref{eq3.1} is true for $n=2$, where $r=1$.

Let's assume that equation \ref{eq3.1} is true for any nested canalyzing
function which has at most $n-1$ essential variables.

Now, consider NCF $f(x_{1},\ldots,x_{n})$.

Suppose $x_{\sigma(1)},\ldots,x_{\sigma(k_{1})}$ are all the most dominant
canalyzing variables of $f$, $1\leq k_{1}\leq n$.

Case 1: $k_{1}=n$, by Theorem \ref{th3.1}, the conclusion is true with $r=1$.

Case 2: $k_{1}<n$, with the same arguments to Theorem \ref{th3.1}, we can get
$f=M_{1}g\oplus b$, where

$M_{1}=(x_{\sigma(1)}\oplus a_{\sigma(1)})\ldots(x_{\sigma(k)}\oplus a_{\sigma(k)})$. Let
$x_{\sigma(1)}=\overline{a_{\sigma(1)}},\ldots, x_{\sigma(k)}=\overline
{a_{\sigma(k)}}$ in $f$, the function $g\oplus b$, hence, $g$, of the remaining
variables will also be nested canalyzing by Proposition \ref{prop3.1}. Since
$g$ has $n-k_{1}\leq n-1$ variables, by induction assumption, we get

$g=M_{2}(M_{3}(\ldots(M_{r-1}(M_{r}\oplus 1)\oplus 1)\ldots)\oplus 1)\oplus b_{1}$, at this time,
$b_{1}$ must be $1$. Otherwise, all the variables in $M_{2}$ will also be the
most dominant variables of $f$. Hence, we are done.
\end{proof}

Because each NCF can be uniquely written as \ref{eq3.1} and the number $r$ is
uniquely determined by $f$, we have

\begin{defn}
\label{def3.3} For a NCF written as equation \ref{eq3.1}, the number $r$ will
be called its LAYER NUMBER. Essential variables of $M_{1}$ will be called the
most dominant variables(canalyzing variable), they belong to the first layer of this NCF.
Essential variables of $M_{2}$ will be
called the second most dominant variables and belong to the second layer 
of this NCF and etc.
\end{defn}

The function in example \ref{exa2.1} has LAYER NUMBER 1 and the function in
example \ref{exa2.2} has LAYER NUMBER 2.

\begin{remark}\label{remark2}
In Theorem \ref{th2}, 1) $k_r\geq 2$. It is impossible that $k_r=1$. Otherwise, $M_r\oplus 1$ will be a factor of $M_{r-1}$ which means LAYER NUMBER is $r-1$. 
2) If variable $x_i$ is in the first layer, and $x_i\oplus a_i$ is a factor of $M_i$, then this NCF is $<i:a_i:b>$ canalyzing, we simply say $x_i$ is a canalyzing variable of this NCF.
\end{remark}

Let $\mathbb{NCF}(n,r)$ stands for the set of all the $n$ variable nested
canalyzing functions with LAYER NUMBER $r$ and $\mathbb{NCF}(n)$ stands for
the set of all the $n$ variable nested canalyzing functions. We have

\begin{cor}
\label{co3.1} Given $n\geq2$,
\[
|\mathbb{NCF}(n,r)|=2^{n+1}\sum_{\substack{k_{1}+\ldots+k_{r}=n\\k_{i}%
\geq1,i=1,\ldots,r-1, k_{r}\geq2}}\binom{n}{k_{1},\ldots,k_{r-1}}%
\]
and
\[
|\mathbb{NCF}(n)|=2^{n+1}\sum_{\substack{r=1}}^{n-1}\sum_{\substack{k_{1}%
+\ldots+k_{r}=n\\k_{i}\geq1,i=1,\ldots,r-1, k_{r}\geq2}}\binom{n}{k_{1}%
,\ldots,k_{r-1}}%
\]
Where the multinomial coefficient $\binom{n}{k_{1},\ldots,k_{r-1}}=\frac
{n!}{k_{1}!\ldots k_{r}!}$
\end{cor}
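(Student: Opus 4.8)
The plan is to derive the counting formulas directly from the unique decomposition established in Theorem \ref{th2}, treating the two formulas as a pair where the second is just a sum of the first over all admissible values of $r$. The key insight is that equation \eqref{eq3.1} sets up a bijection between $\mathbb{NCF}(n,r)$ and a set of combinatorial data, so I would count that data instead of counting functions directly.

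\textbf{Setting up the bijection.}
First I would observe that, by Theorem \ref{th2}, each $f\in\mathbb{NCF}(n,r)$ is uniquely determined by the following choices: an ordered set partition of the variable index set $\{1,\ldots,n\}$ into $r$ nonempty blocks (the layers), where the sizes $k_1,\ldots,k_r$ satisfy $k_i\geq 1$ for $i<r$ and $k_r\geq 2$; a choice of the constant $a_{i_j}\in\mathbb{F}_2$ for each of the $n$ variables appearing in the extended monomials; and a choice of the final constant $b\in\mathbb{F}_2$. I would then count these choices. The number of ways to partition $n$ variables into an \emph{ordered} sequence of labeled layers of prescribed sizes $k_1,\ldots,k_r$ is the multinomial coefficient $\binom{n}{k_1,\ldots,k_r}=\frac{n!}{k_1!\cdots k_r!}$; note this already accounts for the assignment of \emph{which} variables go into each layer, while the order of variables within a single $M_i$ is irrelevant since each $M_i$ is a product. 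Summing over all admissible compositions $(k_1,\ldots,k_r)$ gives the layer structure. For the signs, each of the $n$ variables contributes a free binary choice of $a_{i_j}$, giving a factor $2^n$, and the final $b$ gives another factor $2$, for a total of $2^{n+1}$.

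\textbf{Reconciling the multinomial indexing.}
The one genuinely delicate point — and the step I expect to be the main obstacle — is explaining why the stated formula writes $\binom{n}{k_1,\ldots,k_{r-1}}$ with only $r-1$ indices in the symbol yet defines it as $\frac{n!}{k_1!\cdots k_r!}$ with all $r$ factorials in the denominator. I would address this head-on by noting that once $k_1,\ldots,k_{r-1}$ are fixed, the last block size is forced, $k_r=n-(k_1+\cdots+k_{r-1})$, so the multinomial is genuinely a function of the first $r-1$ arguments; the notation is just suppressing the redundant final argument, and the definition supplied in the corollary makes the intended value unambiguous. I would verify there is no overcounting: distinct combinatorial data yield distinct functions precisely because the decomposition in Theorem \ref{th2} is unique, so the map from data to functions is injective, and it is surjective by the same theorem. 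Hence $|\mathbb{NCF}(n,r)|$ equals $2^{n+1}$ times the sum of these multinomials over admissible $(k_1,\ldots,k_r)$.

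\textbf{Assembling the second formula.}
Finally, since every NCF has a well-defined layer number $r$ lying between $1$ and $n-1$ (the upper bound $r\leq n-1$ coming from the constraint $k_r\geq 2$ together with $k_1+\cdots+k_r=n$, as noted in Remark \ref{remark2}), the sets $\mathbb{NCF}(n,r)$ partition $\mathbb{NCF}(n)$ as $r$ ranges over $1,\ldots,n-1$. Therefore $|\mathbb{NCF}(n)|=\sum_{r=1}^{n-1}|\mathbb{NCF}(n,r)|$, which immediately yields the double-sum formula. The whole argument is essentially bookkeeping once the uniqueness in Theorem \ref{th2} is invoked, so I would keep it short and lean on that theorem as the engine that guarantees the counting is exact.
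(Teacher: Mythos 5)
Your proof is correct and takes essentially the same approach as the paper: the paper counts the layers sequentially via the telescoping product $2^{k_1}\binom{n}{k_1}\cdot 2^{k_2}\binom{n-k_1}{k_2}\cdots 2^{k_r}\binom{n-k_1-\cdots-k_{r-1}}{k_r}\cdot 2$ and then simplifies to $2^{n+1}\binom{n}{k_1,\ldots,k_{r-1}}$, whereas you invoke the multinomial coefficient for ordered set partitions directly, which is the same count with the algebra done in advance. Your appeal to the uniqueness in Theorem \ref{th2} for injectivity and the disjoint-union decomposition $\mathbb{NCF}(n)=\bigcup_{r=1}^{n-1}\mathbb{NCF}(n,r)$ for the second formula match the paper's argument exactly.
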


\begin{proof}
From Equation \ref{eq3.1}, for each choice $k_{1},\ldots,k_{r}$, with
condition $k_{1}+\ldots+k_{r}=n$, $k_{i}\geq1$, $i=1,\ldots,r-1$ and
$k_{r}\geq2$,

there are $2^{k_{1}}\binom{n}{k_{1}}$ many ways to form $M_{1}$,

there are $2^{k_{2}}\binom{n-k_{1}}{k_{2}}$ many ways to form $M_{2}$,

$\ldots$,

there are $2^{k_{r}}\binom{n-k_{1}-\ldots-k_{r-1}}{k_{r}}$ many ways to form
$M_{r}$,

$b$ has two choices.

Hence,
\[
|\mathbb{NCF}(n,r)|=2\sum_{\substack{k_{1}+\ldots+k_{r}=n\\k_{i}%
\geq1,i=1,\ldots,r-1, k_{r}\geq2}}2^{k_{1}+\ldots+k_{r}}\binom{n}{k_{1}}%
\binom{n-k_{1}}{k_{2}}\ldots\binom{n-k_{1}-\ldots-k_{r-1}}{k_{r}}%
\]

\[
=2^{n+1}\sum_{\substack{k_{1}+\ldots+k_{r}=n\\k_{i}\geq1,i=1,\ldots,r-1,
k_{r}\geq2}}\frac{n!}{(k_{1})!(n-k_{1})!}\frac{(n-k_{1})!}{(k_{2}%
)!(n-k_{1}-k_{2})!}\ldots\frac{(n-k_{1}-\ldots-k_{r-1})!}{k_{r}!(n-k_{1}%
-\ldots-k_{r})!}%
\]

\[
=2^{n+1}\sum_{\substack{k_{1}+\ldots+k_{r}=n\\k_{i}\geq1,i=1,\ldots,r-1,
k_{r}\geq2}}\frac{n!}{k_{1}!k_{2}!\ldots k_{r}!}=2^{n+1}\sum_{\substack{k_{1}%
+\ldots+k_{r}=n\\k_{i}\geq1,i=1,\ldots,r-1, k_{r}\geq2}}\binom{n}{k_{1}%
,\ldots,k_{r-1}}.
\]

Since $\mathbb{NCF}(n)=\bigcup_{r=1}^{n-1}\mathbb{NCF}(n,r)$ and
$\mathbb{NCF}(n,i)\bigcap\mathbb{NCF}(n,j)=\phi$ when $i\neq j$, we get the
formula of $|\mathbb{NCF}(n)|$.
\end{proof}

One can check that $|\mathbb{NCF}(2)|=8$, $|\mathbb{NCF}(3)|=64$,
$|\mathbb{NCF}(4)|=736$, $|\mathbb{NCF}(5)|=10624$,...

These results are consistent with those in \cite{Ben, Sas}.

By equating our formula to the recursive relation in \cite{Ben, Sas}, we have
the following

\begin{cor}
\label{co3.2} The solution of the nonlinear recursive sequence%

\[
a_{2}=8, a_{n}=\sum_{r=2}^{n-1}\binom{n}{r-1}2^{r-1}a_{n-r+1}+2^{n+1} , n\geq3
\]

is
\[
a_{n}=2^{n+1}\sum_{\substack{r=1}}^{n-1}\sum_{\substack{k_{1}+\ldots
+k_{r}=n\\k_{i}\geq1,i=1,\ldots,r-1, k_{r}\geq2}}\binom{n}{k_{1}%
,\ldots,k_{r-1}}.%
\]

\end{cor}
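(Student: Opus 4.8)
The plan is to prove the identity by recognizing that both expressions count the same quantity, $|\mathbb{NCF}(n)|$, and then to verify the recursion by a direct combinatorial decomposition rather than by brute multinomial algebra. Write $a_n$ for the closed form on the right-hand side; by Corollary \ref{co3.1} we already know $a_n = |\mathbb{NCF}(n)|$, so the corollary follows once I show that the sequence $|\mathbb{NCF}(n)|$ satisfies the stated recursion together with the initial value $a_2 = 8$. The key preliminary remark is that the recursion is a genuine recurrence: given $a_2$, it expresses each $a_n$ with $n \geq 3$ in terms of $a_2,\ldots,a_{n-1}$, so it has a unique solution. Hence it suffices to check the explicit formula against the initial condition and the recursion. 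The initial condition is immediate: setting $n = 2$ leaves only the term with $r = 1$ and $k_1 = 2$, whose (empty) multinomial coefficient equals $1$, giving $2^{3} = 8 = |\mathbb{NCF}(2)|$.

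For the recursion I would decompose every $n$-variable NCF according to the size $k_1 = s$ of its first layer $M_1$, using the canonical form of Theorem \ref{th3.2}. If $s = n$, then $f = M_1 \oplus b$ is an extended monomial plus a constant (Theorem \ref{th3.1}, layer number $r = 1$), and there are $2^{n}$ choices for $M_1$ and $2$ for $b$, contributing exactly the $2^{n+1}$ term. If $s < n$, I first choose the $s$ first-layer variables together with their shift constants in $\binom{n}{s}2^{s}$ ways; by Proposition \ref{prop3.1} the inner part $g = M_2(\cdots(M_r \oplus 1)\cdots)\oplus 1$ is an NCF on the remaining $n-s$ variables, while the outer constant $b$ is free. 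Summing the resulting count over $s$ and re-indexing $r = s+1$ then reproduces the recursion term by term, after which the uniqueness remark and Corollary \ref{co3.1} finish the argument.

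The delicate point, and the step I expect to require the most care, is the bookkeeping of the constants when $s < n$. Here one must invoke the observation made at the end of the proof of Theorem \ref{th3.2}: the inner function $g$ is forced to have constant term exactly $1$, since otherwise the variables of $M_2$ would themselves be most dominant and would already belong to the first layer. Thus only half of the $|\mathbb{NCF}(n-s)| = a_{n-s}$ functions on the remaining variables are admissible as $g$, contributing a factor $a_{n-s}/2$, and this halving is precisely compensated by the two free choices of the outer constant $b$, so that the clean factor $a_{n-s}$ appears. I would likewise be careful with the range of $s$: the constraint $k_r \geq 2$ forbids a second layer of size one, so a first layer of size $n-1$ collapses into the $r = 1$ case, and $s$ therefore runs only from $1$ to $n-2$, matching the index range $r = 2,\ldots,n-1$ in the recursion. (One could alternatively shortcut the whole argument by citing that the recursion of \cite{Ben, Sas} already enumerates NCFs and equating it with Corollary \ref{co3.1}, but the decomposition above makes the statement self-contained.)
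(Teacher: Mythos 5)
Your argument is correct, and it is genuinely more self-contained than what the paper does. The paper offers no proof of this corollary beyond the remark that the recursion is the known enumeration of unate cascade/fanout-tree functions from \cite{Ben, Sas}; since Corollary \ref{co3.1} shows the closed form also counts $|\mathbb{NCF}(n)|$, the two must agree. You instead re-derive the recursion directly from the canonical form of Theorem \ref{th3.2}: stratify by the size $s=k_1$ of the first layer, count $\binom{n}{s}2^{s}$ choices for $M_1$, observe via the uniqueness of the normal form that the admissible inner functions $g$ are exactly the NCFs on the remaining $n-s$ variables whose canonical constant is $1$ (half of $a_{n-s}$, restored by the two choices of the outer constant $b$), and note that $s=n-1$ is excluded by $k_r\geq 2$ so that $s$ ranges over $1,\ldots,n-2$, matching $r=2,\ldots,n-1$ after reindexing. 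Your handling of the two delicate points --- the forced constant term $b_1=1$ (which is exactly the observation at the end of the paper's proof of Theorem \ref{th3.2}) and the collapse of a size-one last layer --- is what makes the bijection with the recursion's terms exact, and the uniqueness of the solution of the recurrence then closes the argument. What your route buys is independence from the external references: the corollary becomes a theorem of this paper alone, whereas the paper's version silently relies on \cite{Ben, Sas} for the validity of the recursion as a count of NCFs (via the equivalence with unate cascade functions proved in \cite{Jar}). The only cost is that you must verify the bijection carefully, which you do.
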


\section{ Activity, Sensitivity and Hamming Weight}
A Boolean function is balanced if exactly half of its value is zero. Equivalently, the Hamming weight of this $n$ variables Boolean function is $2^{n-1}$. There are $\binom{2^n}{2^{n-1}}$ balanced  functions . It is easy to show that a Boolean functions with canalyzing variables is not balanced, i.e., biased. Actually, very biased. For example, Two constant functions are trivially canalyzing, They are the most biased. Extended monomial functions are the second most biased since for any of them, only one value is nonzero. But biased functions may have no canalyzing variables. For example, $f(x_1,x_2,x_3)=x_1x_2x_3\oplus x_1x_2\oplus x_1x_3\oplus x_2x_3$ is biased but without canalyzing variables.

In Boolean functions, some variable have greater influence over the output of the function than other variables. To formalize this, a  concept called $activity$ was introduced. Let $\frac{\partial f(x_1,\ldots,x_n)}{\partial x_i}=f(x_1,\ldots,x_i\oplus 1,\ldots,x_n)\oplus f(x_1,\ldots,x_i,\ldots,x_n)$. The $activity$ of variable $x_i$ of $f$ is defined as
\begin{equation}\label{act1}
\alpha_i^f=\frac{1}{2^n}\sum_{(x_1,\ldots,x_n)\in \mathbb{F}_2^n}\frac{\partial f(x_1,\ldots,x_n)}{\partial x_i}
\end{equation}
Note, the above definition can also be written as the following
\begin{equation}\label{act2}
\alpha_i^f=\frac{1}{2^{n-1}}\sum_{(x_1,\ldots,x_{i-1},x_{i+1},\ldots,x_n)\in \mathbb{F}_2^{n-1}}(f(x_1,\ldots,\overset{i}{0},\ldots,x_n)\oplus f(x_1,\ldots,\overset{i}{1},\ldots,x_n))
\end{equation}

The activity of any variables of constant functions is 0. For affine function $f(x_1,\ldots,x_n)=x_1\oplus \ldots\oplus x_n\oplus b$, $\alpha_i^f=1$ for any $i$. It is clear, for any $f$ and $i$, we have $0\leq \alpha_i^f\leq 1$.

Another important quantity is the sensitivity of a Boolean function, which measures how sensitive the output of the function is if the input changes (This was introduced in \cite {Coo}). The sensitivity $s^f(x_1,\ldots,x_n)$ of $f$ on vector $(x_1,\ldots,x_n)$ is defined as
 the number of Hamming neighbors of $(x_1,\ldots,x_n)$ on which the function value is different from $f(x_1,\ldots,x_n)$. That is,
\begin{equation*}
s^f(x_1,\ldots,x_n)=|\{i|f(x_1,\ldots,\overset{i}{0},\ldots,x_n)\neq f(x_1,\ldots,\overset{i}{1},\ldots,x_n), i=1,\ldots,n \}|.
\end{equation*}

Obviously, $s^f(x_1,\ldots,x_n)=\sum_{i=1}^n\frac{\partial f(x_1,\ldots,x_n)}{\partial x_i}$

The average sensitivity of function $f$ is defined as

\begin{equation*}
s^f=E[s^f(x_1,\ldots,x_n)]=\frac{1}{2^n}\sum_{(x_1,\ldots,x_n)\in \mathbb{F}_2^n}s^f(x_1,\ldots,x_n)=\sum_{i=1}^n\alpha_i^f.
\end{equation*}
It is clear that $0\leq s^f\leq n$.

The average sensitivity is one of the most studied concepts in the analysis of Boolean functions. Recently, It receives a lot of attention.
 See \cite{Ama, Ber, Ber2, Bop, Che, Chr, Kel, Liu, Li, Qia, Shm2, Shm, Shp,  Sch, Vir}. Bernasconi \cite{Ber} has showed that a random Boolean function has average sensitivity $\frac{n}{2}$. It means the average value of the average sensitivities of all the $n$ variables Boolean functions is $\frac{n}{2}$. In \cite {Shm}, Ilya Shmulevich and Stuart A. Kauffman calculated the activity of all the variables of a Boolean functions with exactly one canalyzing variable and unbiased input for the other variable. Add all the activities, the average sensitivity of this function was also obtained.

In the following, using Equation \ref{eq3.1}, we will obtain the formula of the Hamming weight of any NCF, the  activities of all the variables of any NCF and the average sensitivity (which is bounded by constant) of any NCF.

First, we have
\begin{lemma}\label{lm4.1}
 $(x_1\oplus a_1)\ldots (x_k\oplus a_k)=$
$\left\{
\begin{array}{ll}
1, & (x_1,\ldots,x_k)=(\overline{a_1},\ldots,\overline{a_k})\\
0, & otherwise.
\end{array}\right.$
i.e., only one value is $1$ and all the other $2^k-1$ values are $0$.
\end{lemma}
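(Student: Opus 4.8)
The statement to prove, Lemma \ref{lm4.1}, is an elementary fact about extended monomials over $\mathbb{F}_2$: the product $(x_1\oplus a_1)\cdots(x_k\oplus a_k)$ evaluates to $1$ at exactly one point and to $0$ everywhere else. The plan is to argue directly from the structure of the product over the two-element field, exploiting the fact that each factor is itself a simple indicator.

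First I would observe that over $\mathbb{F}_2$ the single factor $(x_j\oplus a_j)$ equals $1$ precisely when $x_j=\overline{a_j}$ and equals $0$ precisely when $x_j=a_j$; this is immediate from $\overline{a_j}=a_j\oplus 1$ and the fact that $a_j\oplus a_j=0$ while $\overline{a_j}\oplus a_j=1$. Thus each factor is the indicator of the event $x_j=\overline{a_j}$. Since the product of finitely many elements of $\mathbb{F}_2$ is $1$ if and only if every factor is $1$, and is $0$ as soon as a single factor vanishes, the whole product equals $1$ exactly when $x_j=\overline{a_j}$ holds simultaneously for all $j=1,\ldots,k$, that is, when $(x_1,\ldots,x_k)=(\overline{a_1},\ldots,\overline{a_k})$, and equals $0$ at every other input. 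This establishes both cases of the piecewise formula.

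For the final count, I would note that the condition $(x_1,\ldots,x_k)=(\overline{a_1},\ldots,\overline{a_k})$ pins down a unique point of $\mathbb{F}_2^k$, while the domain $\mathbb{F}_2^k$ has $2^k$ elements in total; hence exactly one value is $1$ and the remaining $2^k-1$ values are $0$, as claimed. This yields the stated conclusion with no further computation.

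I do not anticipate a genuine obstacle here, since the result is essentially a restatement of the multiplicativity of indicator functions on $\mathbb{F}_2$; the only point requiring care is the bookkeeping of the complementation, namely that the product is supported at $\overline{a_j}$ rather than at $a_j$, which follows from the sign convention $x_j\oplus a_j$ in the definition of the extended monomial. An alternative but equivalent route would be a short induction on $k$, with base case $k=1$ handled by the single-factor observation above and the inductive step multiplying the length-$(k-1)$ indicator by the remaining factor; however, the direct argument is cleaner and I would present that.
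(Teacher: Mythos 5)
Your argument is correct and complete; the paper itself states Lemma \ref{lm4.1} without proof, treating it as immediate, and your indicator-function argument is exactly the elementary justification that is being taken for granted. Nothing further is needed.
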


\begin{theorem}\label{th4.1}
Given $n\geq 2$. Let $f_1=M_1$, $f_r=M_{1}(M_{2}(\ldots(M_{r-1}%
(M_{r}\oplus 1)\oplus 1)\ldots)\oplus 1)$ , $r\geq 2$, where $M_i$ is same as that in the in Theorem \ref{th3.2},  Then the Hamming weight of $f_r$ is
\begin{equation}\label{eq4.3}
 W(f_r)=\sum_{j=1}^r(-1)^{j-1}2^{n-\sum_{i=1}^jk_i}
\end{equation}
The Hamming weight of $f_r\oplus 1$ is
\begin{equation}\label{eq4.4}
 W(f_r\oplus 1)=\sum_{j=0}^r(-1)^{j}2^{n-\sum_{i=1}^jk_i}
\end{equation}
Where $\sum_{i=1}^0k_i$ should be explained as $0$.
\end{theorem}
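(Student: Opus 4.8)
The plan is to induct on the layer number $r$, exploiting two facts: by Lemma \ref{lm4.1} each extended monomial $M_i$ evaluates to $1$ on exactly one assignment of its $k_i$ variables, and by Theorem \ref{th3.2} the variable sets of the $M_i$ are pairwise disjoint. To keep the ambient dimension honest I would write $W_m(\cdot)$ for the number of $1$-inputs of a function viewed as depending on $m$ variables, so that $W(f_r)=W_n(f_r)$.

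For the base case $r=1$ we have $f_1=M_1$, which by Lemma \ref{lm4.1} equals $1$ on the single assignment of its $k_1$ variables while the remaining $n-k_1$ variables are free; hence $W(f_1)=2^{n-k_1}$, matching \eqref{eq4.3}. For the inductive step I would peel off the outermost monomial by writing $f_r=M_1\cdot(f'\oplus 1)$, where $f'=M_2(M_3(\ldots(M_{r-1}(M_r\oplus 1)\oplus 1)\ldots)\oplus 1)$ is itself a function in the canonical form of Theorem \ref{th3.2}, but with layer number $r-1$ and layer sizes $k_2,\ldots,k_r$, living on the $n-k_1$ variables outside $M_1$. Because the $k_1$ variables of $M_1$ are disjoint from those of $f'$ and $M_1=1$ on exactly one of their $2^{k_1}$ assignments, $f_r=1$ holds precisely when those $k_1$ variables take that unique value and simultaneously $f'=0$. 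Counting then gives $W(f_r)=W_{n-k_1}(f'\oplus 1)=2^{\,n-k_1}-W_{n-k_1}(f')$.

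Substituting the induction hypothesis for $W_{n-k_1}(f')$ finishes the weight computation: since the $\ell$-th layer size of $f'$ is $k_{\ell+1}$, the re-indexing $\sum_{i=2}^{j+1}k_i=\bigl(\sum_{i=1}^{j+1}k_i\bigr)-k_1$ converts the exponent $(n-k_1)-\sum_{i=2}^{j+1}k_i$ into $n-\sum_{i=1}^{j+1}k_i$, after which shifting the summation index and absorbing the leading term $2^{n-k_1}$ as the $j=1$ summand yields exactly \eqref{eq4.3}. Finally \eqref{eq4.4} is immediate from $W(f_r\oplus 1)=2^n-W(f_r)$, where the leading $2^n$ supplies the $j=0$ term of the alternating sum and the remaining terms are those of \eqref{eq4.3} with reversed sign.

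The routine part is the algebra; the one place to be careful is the bookkeeping in the inductive step. I must confirm that $f'$ genuinely has layer number $r-1$ in the sense of Definition \ref{def3.3} (so the induction hypothesis applies with the correct number of summands, using $k_r\ge 2$ to guarantee the inner structure is not degenerate), and I must align the index shift $\sum_{i=2}^{j+1}k_i\mapsto\sum_{i=1}^{j+1}k_i$ with the alternating signs so that $2^{n-k_1}$ reappears as the $j=1$ term rather than being left over. Apart from this indexing, no estimate or inequality is involved, so I do not anticipate a genuine obstacle.
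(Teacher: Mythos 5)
Your proof is correct, but it takes a different route from the paper. The paper computes $W(f_r)$ by a direct count: it partitions the set $\{x : f_r(x)=1\}$ into disjoint groups according to the first even index $2j$ with $M_{2j}=0$ (all earlier $M_i=1$), counts each group via Lemma \ref{lm4.1} and the disjointness of the variable sets, and sums; this forces a case split on the parity of $r$ and produces each term $2^{n-\sum_{i\le j}k_i}-2^{n-\sum_{i\le j+1}k_i}$ of the alternating sum with an explicit combinatorial meaning. You instead induct on the layer number, peeling off $M_1$ via $f_r=M_1(f'\oplus 1)$ and using $W_n(f_r)=W_{n-k_1}(f'\oplus 1)=2^{n-k_1}-W_{n-k_1}(f')$; the re-indexing you describe is exactly right, the base case $r=1$ covers the $r=2$ step (where $f'=M_2$), and $n-k_1\ge k_r\ge 2$ keeps the inductive instance legitimate. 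Your approach buys a shorter argument with no parity case analysis and makes the complementation identity \eqref{eq4.4} do real work inside the induction rather than only at the end; the paper's approach buys a transparent interpretation of each summand as the size of an explicit group of inputs. One cosmetic remark: the condition $k_r\ge 2$ is only needed for uniqueness of the layer decomposition, not for the weight formula itself, so your worry about degeneracy there is harmless but not essential.
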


\begin{proof}

First, let's consider the Hamming weight of $f_r$.

When $r=1$, we know the result is true by Lemma \ref{lm4.1}.

When $r>1$, we consider two cases:

Case A: $r$ is odd, $r=2t+1$.
 
 All the vectors make $f=1$ will be divided into the following disjoint groups.

Group $1$: $M_1=1$, $M_2=0$;

Group $2$: $M_1=1$, $M_2=1$, $M_3=1$, $M_4=0$;

$\ldots$

Group $j$: $M_1=1$, $M_2=1$, $\ldots$, $ M_{2j-1}=1$, $M_{2j}=0$;

$\ldots$

Group $t$ : $M_1=1$, $M_2=1$, $\ldots$, $M_{2t-1}=1$, $M_{2t}=0$;

Group $t+1$ : $M_1=1$, $M_2=1$, $\ldots$, $M_{2t}=1$, $M_{2t+1}=M_r=1$.

In Group $1$, the number of  vectors is $(2^{k_2}-1)2^{n-k_1-k_2}=2^{n-k_1}-2^{n-k_1-k_2}$.

In Group $2$, the number of vector is $(2^{k_4}-1)2^{n-k_1-k_2-k_3-k_4}=2^{n-k_1-k_2-k_3}-2^{n-k_1-k_2-k_3-k_4}$.

\ldots

In Group $t$, the number of vector is $(2^{k_{2t}}-1)2^{n-k_1-\ldots-k_{2t}}=2^{n-k_1-\ldots-k_{2t-1}}-2^{n-k_1-\ldots -k_{2t}}$.

In Group $t+1$, the number of vectors is $2^{n-k_1-\ldots -k_r}=1$.

Add all of them, we get the formula Equation \ref{eq4.3}.

Case B: $r$ is even, $r=2t$.
 
 All the vectors make $f=1$ will be divided into the following disjoint groups.

Group $1$: $M_1=1$, $M_2=0$;

Group $2$: $M_1=1$, $M_2=1$, $M_3=1$, $M_4=0$;

$\ldots$

Group $j$: $M_1=1$, $M_2=1$, $\ldots$, $ M_{2j-1}=1$, $M_{2j}=0$;

$\ldots$

Group $t-1$ : $M_1=1$, $M_2=1$, $\ldots$, $M_{2t-3}=1$, $M_{2t-2}=0$;

Group $t$ : $M_1=1$, $M_2=1$, $\ldots$, $M_{2t-1}=1$, $M_{2t}=M_r=0$.

In Group $1$, the number of  vectors is $(2^{k_2}-1)2^{n-k_1-k_2}=2^{n-k_1}-2^{n-k_1-k_2}$.

In Group $2$, the number of vector is $(2^{k_4}-1)2^{n-k_1-k_2-k_3-k_4}=2^{n-k_1-k_2-k_3}-2^{n-k_1-k_2-k_3-k_4}$.

\ldots

In Group $t-1$, the number is $(2^{k_{2t-2}}-1)2^{n-k_1-\ldots-k_{2t-2}}=2^{n-k_1-\ldots-k_{2t-3}}-2^{n-k_1-\ldots -k_{2t-2}}$.

In Group $t$, the number of vectors is $2^{n-k_1-\ldots -k_{2t-1}}-2^{n-k_1-\ldots -k_{2t}}=2^{k_{2t}}-1$.

Add all of them, we get the formula Equation \ref{eq4.3} again.

Because  $|\{(x_1,\ldots,x_n)|f(x_1,\ldots,x_n)=0\}|+|\{(x_1,\ldots,x_n)|f(x_1,\ldots,x_n)=1\}|=2^n$,
 we know the Hamming weight of $f_r\oplus 1$ is 
 \begin{equation*}
 W(f_r\oplus 1)=2^n-W(f_r)=2^n-\sum_{j=1}^r(-1)^{j-1}2^{n-\sum_{i=1}^jk_i}=\sum_{j=0}^r(-1)^{j}2^{n-\sum_{i=1}^jk_i}.
 \end{equation*}
Where $\sum_{i=1}^0k_i$ should be explained as $0$.
\end{proof}

In the following, we will calculate the activities of the variables of any NCF.

Let $f$ be a NCF and written as the form in Theorem \ref{th3.2}. Without loss of generality(to avoid the complicated notation), we assume
 $M_1=(x_1\oplus a_1)(x_2\oplus a_2)\ldots (x_{k_1}\oplus a_{k_1})$ and $m_1=(x_1\oplus a_1)\ldots(x_{i-1}\oplus a_{i-1})(x_{i+1}\oplus a_{i+1})\ldots (x_{k_1}\oplus a_{k_1})$, i.e., $M_1=(x_i\oplus a_i)m_1$.
 
 If $r=1$, i.e., $k_1=n$, then 
 
 \begin{equation*}
\alpha_i^f=\frac{1}{2^{n-1}}\sum_{(x_1,\ldots,x_{i-1},x_{i+1},\ldots,x_n)\in \mathbb{F}_2^{n-1}}(f(x_1,\ldots,\overset{i}{0},\ldots,x_n)\oplus f(x_1,\ldots,\overset{i}{1},\ldots,x_n))
\end{equation*}

\begin{equation*}
=\frac{1}{2^{n-1}}\sum_{(x_1,\ldots,x_{i-1},x_{i+1},\ldots,x_n)\in \mathbb{F}_2^{n-1}}m_1=\frac{1}{2^{n-1}}W(m_1)=\frac{1}{2^{n-1}}.
\end{equation*}
by Lemma \ref{lm4.1}.

If $1<r\leq n-1$,
 
 Let's consider the activity of $x_i$ in the first layer, i.e.,  $1\leq i\leq k_1$. We have
 
\begin{equation*}
\alpha_i^f=\frac{1}{2^{n-1}}\sum_{(x_1,\ldots,x_{i-1},x_{i+1},\ldots,x_n)\in \mathbb{F}_2^{n-1}}(f(x_1,\ldots,\overset{i}{0},\ldots,x_n)\oplus f(x_1,\ldots,\overset{i}{1},\ldots,x_n))
\end{equation*}

\begin{equation*}
=\frac{1}{2^{n-1}}\sum_{(x_1,\ldots,x_{i-1},x_{i+1},\ldots,x_n)\in \mathbb{F}_2^{n-1}}m_1(M_{2}(\ldots(M_{r-1}%
(M_{r}\oplus 1)\oplus 1)\ldots)\oplus 1).
\end{equation*}

\begin{equation*}
=\frac{1}{2^{n-1}}W(m_1(M_{2}(\ldots(M_{r-1}%
(M_{r}\oplus 1)\oplus 1)\ldots)\oplus 1)).
\end{equation*}
$=\left\{
\begin{array}{ll}
\frac{1}{2^{n-1}}\sum_{j=1}^r(-1)^{j-1}2^{n-1-(\sum_{i=1}^jk_i-1)}, & k_1>1\\
\frac{1}{2^{n-1}}\sum_{j=0}^{r-1}(-1)^{j}2^{n-1-\sum_{i=1}^jk_{i+1}}, & k_1=1.
\end{array}\right.$

$=\left\{
\begin{array}{ll}
\frac{1}{2^{n-1}}\sum_{j=1}^r(-1)^{j-1}2^{n-\sum_{i=1}^jk_i}, & k_1>1\\
\frac{1}{2^{n-1}}\sum_{j=0}^{r-1}(-1)^{j}2^{n-\sum_{i=0}^jk_{i+1}}, & k_1=1.
\end{array}\right.=\left\{
\begin{array}{ll}
\frac{1}{2^{n-1}}\sum_{j=1}^r(-1)^{j-1}2^{n-\sum_{i=1}^jk_i}, & k_1>1\\
\frac{1}{2^{n-1}}\sum_{j=0}^{r-1}(-1)^{j}2^{n-\sum_{i=1}^{j+1}k_{i}}, & k_1=1.
\end{array}\right.$

$=\left\{
\begin{array}{ll}
\frac{1}{2^{n-1}}\sum_{j=1}^r(-1)^{j-1}2^{n-\sum_{i=1}^jk_i}, & k_1>1\\
\frac{1}{2^{n-1}}\sum_{j=1}^r(-1)^{j-1}2^{n-\sum_{i=1}^jk_i}, & k_1=1.
\end{array}\right.=\frac{1}{2^{n-1}}\sum_{j=1}^r(-1)^{j-1}2^{n-\sum_{i=1}^jk_i}$

by Theorem \ref{th4.1}. Note, in the above, $k_1=1$ means $m_1=1$,  we used the Equation \ref{eq4.4} with layer number $r-1$ and the first layer is $M_2$ for $n-1$ variables functions.

Now let's consider the variables in the second layer, i.e., $x_i$ is an essential variable of $M_2$. We have
$M_2=(x_i+a_i)m_2$ and 

\begin{equation*}
\alpha_i^f=\frac{1}{2^{n-1}}\sum_{(x_1,\ldots,x_{i-1},x_{i+1},\ldots,x_n)\in \mathbb{F}_2^{n-1}}(f(x_1,\ldots,\overset{i}{0},\ldots,x_n)\oplus f(x_1,\ldots,\overset{i}{1},\ldots,x_n))
\end{equation*}

\begin{equation*}
=\frac{1}{2^{n-1}}\sum_{(x_1,\ldots,x_{i-1},x_{i+1},\ldots,x_n)\in \mathbb{F}_2^{n-1}}M_1(m_{2}(\ldots(M_{r-1}%
(M_{r}\oplus 1)\oplus 1)\ldots)).
\end{equation*}
\begin{equation*}
=\frac{1}{2^{n-1}}\sum_{(x_1,\ldots,x_{i-1},x_{i+1},\ldots,x_n)\in \mathbb{F}_2^{n-1}}M_1m_{2}(\ldots(M_{r-1}%
(M_{r}\oplus 1)\oplus 1)\ldots).
\end{equation*}
\begin{equation*}
=\frac{1}{2^{n-1}}\sum_{j=1}^{r-1}(-1)^{j-1}2^{n-1-((k_1+k_2-1)+\ldots +k_{j+1}))}=\frac{1}{2^{n-1}}\sum_{j=1}^{r-1}(-1)^{j-1}2^{n-\sum_{i=1}^{j+1}k_i}
\end{equation*}

by Equation \ref{eq4.3} in Theorem \ref{th4.1}. Note, $M_1m_2$ is the first layer, $M_3$ is the second layer and etc.

Now let's consider the variables in the $lth$  layer, i.e., $x_i$ is an essential variable of $M_l$, $2\leq l\leq r-1$. We have
$M_l=(x_i+a_i)m_l$ and 

\begin{equation*}
\alpha_i^f=\frac{1}{2^{n-1}}\sum_{(x_1,\ldots,x_{i-1},x_{i+1},\ldots,x_n)\in \mathbb{F}_2^{n-1}}(f(x_1,\ldots,\overset{i}{0},\ldots,x_n)\oplus f(x_1,\ldots,\overset{i}{1},\ldots,x_n))
\end{equation*}

\begin{equation*}
=\frac{1}{2^{n-1}}\sum_{(x_1,\ldots,x_{i-1},x_{i+1},\ldots,x_n)\in \mathbb{F}_2^{n-1}}M_1\ldots M_{l-1}m_l(M_{l+1}(\ldots (M_r\oplus 1)\ldots )\oplus 1).
\end{equation*}

\begin{equation*}
=\frac{1}{2^{n-1}}\sum_{j=1}^{r-l+1}(-1)^{j-1}2^{n-1-((k_1+\ldots + k_{l}-1)+k_{l+1}+\ldots  +k_{j+l-1}))}=\frac{1}{2^{n-1}}\sum_{j=1}^{r-l+1}(-1)^{j-1}2^{n-\sum_{i=1}^{j+l-1}k_i}
\end{equation*}

by Equation \ref{eq4.3} in Theorem \ref{th4.1}. Note, $M_1\ldots M_{l-1}m_l$ is the first layer, $M_{l+1}$ is the second layer, and etc.

Let $x_i$ be the variable in the last layer $M_{r}$, we have

\begin{equation*}
=\frac{1}{2^{n-1}}\sum_{(x_1,\ldots,x_{i-1},x_{i+1},\ldots,x_n)\in \mathbb{F}_2^{n-1}}M_1M_2\ldots\ M_{r-1}m_r=\frac{1}{2^{n-1}}
\end{equation*} by Lemma \ref{lm4.1}.

Variables in the same layer have the same activities, so we use $A_l^f$ to stand for the activity number of each variable in the $lth$ layer $M_l$, $1\leq l\leq r$. We find the formula of $A_l^f$ for $2\leq l\leq r-1$ is also true when $l=r$ or $r=1$. Hence, we write all the above as the following

\begin{theorem}\label{th4.2}
Let $f$ be a NCF and written as in the Theorem \ref{th3.2}.

then the activity of each variable in the $lth$  layer , $1\leq l\leq r$, is

\begin{equation}\label{4.5}
A_l^f=\frac{1}{2^{n-1}}\sum_{j=1}^{r-l+1}(-1)^{j-1}2^{n-\sum_{i=1}^{j+l-1}k_i}
\end{equation}

The average sensitivity of $f$ is 

\begin{equation}\label{eq4.6}
s^f=\sum_{l=1}^rk_lA_l^f=\frac{1}{2^{n-1}}\sum_{l=1}^r k_l\sum_{j=1}^{r-l+1}(-1)^{j-1}2^{n-\sum_{i=1}^{j+l-1}k_i}
\end{equation}
\end{theorem}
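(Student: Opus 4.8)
The plan is to compute each layer activity $A_l^f$ directly from the defining expression \ref{act2}, reduce it to a Hamming-weight computation that Theorem \ref{th4.1} already evaluates, and then assemble the average sensitivity by grouping variables according to their layer. Throughout, the key structural facts I would exploit are that the variable sets of distinct layers are pairwise disjoint and that the Boolean derivative $\partial/\partial x_i=f|_{x_i=0}\oplus f|_{x_i=1}$ annihilates the constant $b$ and every $\oplus 1$.

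First I would fix a variable $x_i$ in the $l$th layer and write $M_l=(x_i\oplus a_i)m_l$, where $m_l$ is the extended monomial on the remaining $k_l-1$ variables of $M_l$. Because of disjointness, $x_i$ occurs in $f$ only through the single factor $(x_i\oplus a_i)$. Applying the $\mathbb{F}_2$ product rule to the nested form \ref{eq3.1}—every factor $M_j$ with $j<l$ does not involve $x_i$ and so passes through, while $\partial(x_i\oplus a_i)/\partial x_i=1$—I would obtain
\begin{equation*}
\frac{\partial f}{\partial x_i}=M_1M_2\cdots M_{l-1}\,m_l\,\bigl(M_{l+1}(\ldots(M_{r}\oplus 1)\ldots)\oplus 1\bigr).
\end{equation*}
In particular this is independent of $x_i$ and of $b$, which already explains why all $k_l$ variables of a layer share the same activity.

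Next I would observe that the right-hand side is itself a function of the type handled by Theorem \ref{th4.1}, but on $n-1$ variables: the single extended monomial $\widetilde M_1:=M_1\cdots M_{l-1}m_l$ serves as the first layer and $M_{l+1},\ldots,M_{r}$ as the subsequent ones, giving an $f_\rho$ with layer number $\rho=r-l+1$ and sizes $\widetilde k_1=\bigl(\sum_{i=1}^{l}k_i\bigr)-1$, $\widetilde k_2=k_{l+1},\ldots,\widetilde k_\rho=k_r$. Since $\sum_{i=1}^{j}\widetilde k_i=\bigl(\sum_{i=1}^{l+j-1}k_i\bigr)-1$ for every $j\ge 1$, substituting $N=n-1$ into \ref{eq4.3} and dividing by $2^{n-1}$ yields
\begin{equation*}
A_l^f=\frac{1}{2^{n-1}}\,W\!\left(\frac{\partial f}{\partial x_i}\right)=\frac{1}{2^{n-1}}\sum_{j=1}^{r-l+1}(-1)^{j-1}2^{\,n-\sum_{i=1}^{j+l-1}k_i},
\end{equation*}
which is \ref{4.5}. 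Formula \ref{eq4.6} then follows immediately from $s^f=\sum_{i=1}^{n}\alpha_i^f$ by collecting the $k_l$ equal contributions of each layer.

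The hard part will be the bookkeeping in the degenerate boundary cases, where the clean reduction to Theorem \ref{th4.1} must be adjusted. When $l=1$ and $k_1=1$ the monomial $m_1$ is empty ($\widetilde k_1=0$), so $\widetilde M_1$ disappears and one must instead read the derivative as $M_2(\ldots(M_r\oplus 1)\ldots)\oplus 1$ and invoke the complementary weight formula \ref{eq4.4} with layer number $r-1$; when $l=r$ (and when $r=1$) the tail is empty and the derivative collapses to a single extended monomial of Hamming weight $1$ by Lemma \ref{lm4.1}, forcing $A_r^f=1/2^{n-1}$. I would close the argument by checking that the single uniform expression \ref{4.5} reproduces each of these boundary values—for instance, for $l=r$ the sole term $j=1$ gives $2^{\,n-\sum_{i=1}^{r}k_i}=2^{0}=1$—so that no separate cases survive and \ref{4.5} is valid for all $1\le l\le r$ and all $r\ge 1$.
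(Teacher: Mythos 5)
Your proposal is correct and follows essentially the same route as the paper's own argument: compute the Boolean derivative of a layer-$l$ variable to get $M_1\cdots M_{l-1}m_l\bigl(M_{l+1}(\ldots(M_r\oplus 1)\ldots)\oplus 1\bigr)$, recognize this as a nested form whose Hamming weight Theorem \ref{th4.1} evaluates (using \ref{eq4.4} in the degenerate case $l=1$, $k_1=1$, and Lemma \ref{lm4.1} when $l=r$), divide by $2^{n-1}$, and sum the $k_l$ equal contributions per layer. The paper merely presents the cases $l=1$, $l=2$, general $l$, and $l=r$ separately before observing that the single formula \ref{4.5} covers them all, which is exactly the unification you perform at the end.
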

 We do some analysis about the formulas in Theorem \ref{th4.2}, we have

 \begin{cor}\label{co4.1}
 $n\geq 3$, $A_1^f>A_2^f>\ldots >A_r^f$ and $\frac{n}{2^{n-1}}\leq  s^f < 2- \frac{1}{2^{n-2}}$ 
 \end{cor}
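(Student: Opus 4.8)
\emph{Proof proposal.} The plan is to reduce both assertions to two elementary facts about the layer activities $A_l^f$, after rewriting formula (\ref{4.5}) of Theorem \ref{th4.2} in a cleaner index. Writing $S_p=k_1+\cdots+k_p$ (with $S_0=0$) and $u_p=2^{1-S_p}$, the substitution $p=j+l-1$ turns (\ref{4.5}) into the alternating sum
\begin{equation*}
A_l^f=\sum_{p=l}^{r}(-1)^{p-l}u_p,\qquad u_p=2^{1-S_p},
\end{equation*}
whose terms satisfy $u_l>u_{l+1}>\cdots>u_r>0$ because every $k_i\ge 1$ makes $S_p$ strictly increasing. From this form I would record two facts used throughout: the recurrence $A_l^f=u_l-A_{l+1}^f$ (with the convention $A_{r+1}^f=0$), and the standard alternating-series estimate $0<A_l^f\le u_l$, where $A_r^f=u_r=2^{1-n}$ and the upper inequality is strict once $l<r$.

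First I would prove the monotonicity $A_1^f>\cdots>A_r^f$. From the recurrence, $A_l^f-A_{l+1}^f=u_l-2A_{l+1}^f$, so it suffices to show $A_{l+1}^f<u_l/2=2^{-S_l}$. Since $A_{l+1}^f\le u_{l+1}=2^{1-S_{l+1}}\le 2^{-S_l}$ (the last step because $k_{l+1}\ge 1$), I only need one of these inequalities to be strict: when $l+1<r$ the sum $A_{l+1}^f$ has at least two terms and the first inequality is strict, while when $l+1=r$ the second inequality is strict because $k_r\ge 2$ forces $S_{l+1}=S_l+k_r\ge S_l+2$. This settles the first claim. The lower bound on $s^f$ then follows at once: by monotonicity every activity is at least $A_r^f=2^{1-n}=1/2^{n-1}$, so $s^f=\sum_{l=1}^r k_lA_l^f\ge (1/2^{n-1})\sum_l k_l=n/2^{n-1}$, with equality exactly when $r=1$ (confirming that a single layer minimizes $s^f$).

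For the upper bound I would first drop the correction term: from $A_l^f=u_l-A_{l+1}^f$ and $A_{l+1}^f\ge 0$ we get $s^f\le\sum_{l=1}^r k_lu_l$, and this is strict whenever $r\ge 2$ since the peeled-off quantity $\sum_{l=1}^{r-1}k_lA_{l+1}^f$ is then positive. It remains to bound $\sum_l k_lu_l=2\sum_l k_l2^{-S_l}$. The main obstacle is that the naive estimate here yields only $2-2^{1-n}$, which is weaker than the claimed $2-1/2^{n-2}=2-2^{2-n}$; the missing factor of two must be extracted from the last layer. I would use $k_l\le 2^{k_l}-1$ to telescope
\begin{equation*}
\sum_{l=1}^{r-1}k_l2^{-S_l}\le\sum_{l=1}^{r-1}\bigl(2^{-S_{l-1}}-2^{-S_l}\bigr)=1-2^{-S_{r-1}},
\end{equation*}
then add the last term $k_r2^{-n}$ and substitute $S_{r-1}=n-k_r$ to obtain $\sum_l k_l2^{-S_l}\le 1-(2^{k_r}-k_r)2^{-n}$. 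The constraint $k_r\ge 2$ gives $2^{k_r}-k_r\ge 2$, hence $\sum_l k_lu_l\le 2-2^{2-n}$, and together with the strict inequality above this yields $s^f<2-1/2^{n-2}$ for $r\ge 2$.

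Finally, the degenerate case $r=1$, where the correction term vanishes and the first inequality is no longer strict, has to be checked separately. There $s^f=n/2^{n-1}$, and the desired $n/2^{n-1}<2-2^{2-n}$ reduces to $n+2<2^n$, which holds for all $n\ge 3$. This is precisely where the hypothesis $n\ge 3$ enters, since at $n=2$ one gets only the equality $s^f=1=2-1/2^{0}$. Assembling the three cases ($r=1$, $r\ge 2$, and the monotonicity chain) completes the proof of Corollary \ref{co4.1}.
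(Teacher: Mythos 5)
Your proposal is correct, and the first half (the monotonicity chain and the lower bound $s^f\ge n/2^{n-1}$) follows essentially the same alternating-series reasoning as the paper: both arguments sandwich $A_l^f$ between $u_l-u_{l+1}$ and $u_l$ and compare consecutive layers, though you are more careful at the endpoint $l+1=r$, where $A_r^f=u_r$ exactly and the strictness must come from $k_r\ge 2$ rather than from truncating the series. The upper bound is where you genuinely diverge. The paper also reduces to bounding $U(k_1,\ldots,k_r)=\sum_l k_l u_l$, but then treats this as a discrete optimization problem over all admissible tuples, arguing by an exchange/greedy heuristic that the maximum is attained at $k_1=\cdots=k_{n-2}=1$, $k_{n-1}=2$ and evaluating $U(1,\ldots,1,2)=2-2^{2-n}$ there. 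You instead prove the same bound $\sum_l k_l u_l\le 2-2^{2-n}$ directly for every tuple via the telescoping estimate $k_l\le 2^{k_l}-1$, which collapses $\sum_{l<r}k_l2^{-S_l}$ to $1-2^{-S_{r-1}}$ and extracts the needed extra factor from $2^{k_r}-k_r\ge 2$. Your route buys a fully rigorous, computation-free verification of the bound (the paper's exchange argument is only sketched), at the cost of not identifying which parameters actually maximize $U$ --- information the paper reuses to motivate Conjecture \ref{conj4.1}. You also explicitly handle the $r=1$ case, where $s^f=U$ with equality and the strict inequality $n/2^{n-1}<2-2^{2-n}$ must be checked separately (this is exactly where $n\ge 3$ enters); the paper glosses over this degenerate case.
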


\begin{proof}
\begin{equation*}
A_l^f=\frac{1}{2^{n-1}}\sum_{j=1}^{r-l+1}(-1)^{j-1}2^{n-\sum_{i=1}^{j+l-1}k_i}=\frac{1}{2^{n-1}}(2^{n-k_1-\ldots -k_l}-2^{n-k_1-\ldots -k_{l+1}}+\ldots (-1)^{r-l})
\end{equation*}
Since the sum is an alternate decreasing  sequence and $k_{l+1}\geq 1$, we have
\begin{equation*}
\frac{1}{2^{n-1}}(2^{n-k_1-\ldots -k_l-1})\leq \frac{1}{2^{n-1}}(2^{n-k_1-\ldots -k_l}-2^{n-k_1-\ldots -k_{l+1}})< A_l^f< \frac{1}{2^{n-1}}(2^{n-k_1-\ldots -k_l})
\end{equation*}
Hence,
\begin{equation*}
 A_{l+1}^f< \frac{1}{2^{n-1}}(2^{n-k_1-\ldots -k_{l+1}})\leq \frac{1}{2^{n-1}}(2^{n-k_1-\ldots -k_l-1})<A_l^f.
\end{equation*}

We have 

\begin{equation*}
k_1A_1^f=\frac{k_1}{2^{n-1}}(2^{n-{k_1}}-2^{n-{k_1}-k_2}+2^{n-{k_1}-k_2-k_3}-\ldots (-1)^{r-1})
\end{equation*}

\begin{equation*}
k_2A_2^f=\frac{k_2}{2^{n-1}}(2^{n-k_1-k_2}-2^{n-k_1-k_2-k_3}+2^{n-k_1-k_2-k_3-k_4}-\ldots (-1)^{r-2})
\end{equation*}

\begin{equation*}
\ldots \ldots
\end{equation*}

\begin{equation*}
k_lA_l^f=\frac{k_l}{2^{n-1}}(2^{n-k_1-\ldots-k_l}-2^{n-k_1-\ldots-k_l-k_{l+1}}-\ldots (-1)^{r-l})
\end{equation*}
\begin{equation*}
\ldots \ldots
\end{equation*}

\begin{equation*}
k_rA_r^f=\frac{k_r}{2^{n-1}}
\end{equation*}

Hence, $s^f=\sum_{l=1}^rk_lA_l^f\geq \frac{k_1}{2^{n-1}}+\frac{k_2}{2^{n-1}}+\ldots +\frac{k_r}{2^{n-1}}=\frac{n}{2^{n-1}}$, so we know the NCF with LAYER NUMBER 1 has the minimal average sensitivity.

On the other hand, $s^f=\sum_{l=1}^rk_lA_l^f<\frac{k_1}{2^{n-1}}2^{n-k_1}+\frac{k_2}{2^{n-1}}2^{n-k_1-k_2}+\ldots +\frac{k_l}{2^{n-1}}2^{n-k_1-\ldots -k_l}+\dots+\frac{k_r}{2^{n-1}}=U(k_1,\ldots,k_r)$, where $k_1+\ldots +k_r=n$, $k_i\geq 1$, $i=1,\ldots, r-1$ and $k_r\geq 2$. We will find the maximal value of $U(k_1,\ldots,k_r)$ in the following.

First, we claim $k_r=2$ if $U(k_1,\ldots,k_r)$ reach maximal value.  Because if $k_r$ is increased by $1$, and the last term makes $\frac{1}{2^{n-1}}$ more contributions to $U(k_1,\ldots,k_r)$, then there exists $l$, $k_l$ will be decreased by $1$ ($k_1+\ldots+k_r=n$), hence 

\begin{equation*}
\frac{k_l}{2^{n-1}}2^{n-k_1-\ldots -k_l}
\end{equation*}

 will be decreased more than $\frac{1}{2^{n-1}}$.

Now, Look at $\frac{k_1}{2^{n-1}}2^{n-k_1}$, it is obvious it attains the maximal value only when $k_1=1$ or $2$ but obviously $k_1=1$ will be the choice since it also make all the other terms greater.. 

Now Look at $\frac{k_2}{2^{n-1}}2^{n-k_1-k_2}$, it attains the maximal value  when $k_1=k_2=1$ or $k_1=1$ and $k_2=2$, again, $k_2=1$ is the best choice  to make all the other terms greater.
 
In general, if $k_1=\ldots=k_{l-1}=1$, then $\frac{k_l}{2^{n-1}}2^{n-k_1-\ldots -k_l}$  attains its maximal value when $k_l=1$, where $1\leq l\leq r-1$.

In summary, we have showed that $U(k_1,\ldots,k_r)$ reaches maximal value when $r=n-1$, $k_1=\ldots=k_{n-2}=1$, $k_{n-1}=2$ and 

Max $ U(k_1,\ldots,k_r)=U(1,\ldots,1,2)=\frac{1}{2^{n-1}}(2^{n-1}+2^{n-2}+\ldots+ 2^{2}+2)=2-\frac{1}{2^{n-2}}$.
\end {proof}

\begin{remark}
So, we know the average sensitivity is bounded by constants for any NCF with any number of variables Since the minimal value approaches to $0$ and the maximal value of $U(k_1,\ldots,k_r)$ approaches to $2$ as $n\rightarrow \infty$. Hence, $0<s^f<2$ for any NCF with arbitrary number of variables. 
\end{remark}

In the following, we evaluate the formula Equation \ref{eq4.6} for some parameters $k_1,\ldots,k_r$, we have
\begin{lemma}\label{lm4.2}
1) When $r=n-1$, $k_1=\ldots=k_{n-2}=1$, $k_{n-1}=2$, $s^f=\frac{4}{3}-\frac{3+(-1)^n}{3\times 2^n}$;

2) Given $n\geq 4$, $r=n-2$, $k_1=\ldots=k_{n-3}=1$, $k_{n-2}=3$, $s^f=\frac{4}{3}-\frac{9+5(-1)^{n-1}}{3\times 2^n}$;

3) If $n$ is even and $n\geq 6 $, $r=\frac{n}{2}$, $k_1=1$, $k_2=\ldots=k_{\frac{n}{2}-1}=2$, $k_{\frac{n}{2}}=3$, $s^f=\frac{4}{3}-\frac{4}{3\times 2^n}$. Hence, these three cardinalities are equal if $n$ is even.
\end{lemma}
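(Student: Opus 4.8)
The plan is to replace the double sum in Equation \ref{eq4.6} by a single sum that is transparent to evaluate, and then to substitute each of the three prescribed parameter patterns and sum the resulting geometric series. Writing $S_p=\sum_{i=1}^p k_i$ for the partial sums and substituting $p=j+l-1$ in the inner sum of Equation \ref{eq4.6}, one has $(-1)^{j-1}=(-1)^{p-l}$ with $p$ running from $l$ to $r$; interchanging the order of summation over $l$ and $p$ then yields
\begin{equation*}
s^f = 2\sum_{p=1}^r 2^{-S_p}\,T_p, \qquad T_p=\sum_{l=1}^p(-1)^{p-l}k_l .
\end{equation*}
Here $T_p$ is an alternating sum of the layer sizes, and in each case at hand it collapses to a tiny value because almost all the $k_l$ are equal. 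This recasting uses nothing beyond Theorem \ref{th4.2}, so it is available immediately.

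For Cases 1 and 2 every layer except the last has $k_l=1$, so for $p$ below the final index $T_p=\sum_{l=1}^p(-1)^{p-l}$ equals $1$ when $p$ is odd and $0$ when $p$ is even, which kills half the terms. The surviving summands $2^{-p}$ (odd $p$) form a geometric series of ratio $1/4$ whose sum is a multiple of $2/3$; after the overall factor $2$ this produces the main term $\frac{4}{3}$. The final index contributes a single correction $2\cdot 2^{-n}T_r$, where $T_r$ equals $k_r$ adjusted by the alternating tail of ones, and its value depends on the parity of $n$. I would treat $n$ even and $n$ odd separately: the parity of $T_r$ is precisely what manufactures the $(-1)^n$ factor in part 1 and the $(-1)^{n-1}$ factor in part 2. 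Collecting everything over the common denominator $3\cdot 2^n$ gives the two stated expressions.

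For Case 3 the pattern $k_1=1$, $k_l=2$ for $2\le l\le \frac n2-1$, $k_{n/2}=3$ gives $S_p=2p-1$ on the plateau and $S_{n/2}=n$, while the alternating sum telescopes to $T_p=1$ throughout the plateau. Thus the plateau contributes $\frac12+\sum 2^{-(2p-1)}$, again a ratio-$1/4$ series summing to $\frac23-\frac{8}{3\cdot2^n}$. The only delicate point is the boundary term $2\cdot 2^{-n}T_{n/2}$: although $T_{n/2}$ splits according to the parity of $n/2$, the two parity contributions cancel and leave $T_{n/2}=2$ unconditionally, which is exactly why this case has a parity-free answer $\frac43-\frac{4}{3\cdot2^n}$. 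Finally, specializing parts 1 and 2 to even $n$ (where $(-1)^n=1$ and $(-1)^{n-1}=-1$) gives $\frac43-\frac{4}{3\cdot2^n}$ in both, so all three values coincide for even $n$, which is the concluding remark of the lemma.

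The main obstacle is bookkeeping rather than conceptual: one must pin down the boundary term $T_r$ and its parity behaviour correctly, and keep the index ranges of the truncated geometric series exact so that the corrections of order $2^{-n}$ emerge with the right numerators. Once the single-sum identity $s^f=2\sum_p 2^{-S_p}T_p$ is established, each of the three cases is a short finite computation, and the parity cancellation in Case 3 is the one place where I would check the algebra especially carefully.
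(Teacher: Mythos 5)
Your proposal is correct, and I verified the key identity and all three cases: interchanging the order of summation in Equation \ref{eq4.6} (with $S_p=\sum_{i=1}^p k_i$ and $p=j+l-1$) does give $s^f=2\sum_{p=1}^r 2^{-S_p}T_p$ with $T_p=\sum_{l=1}^p(-1)^{p-l}k_l$, your values $T_p\in\{0,1\}$ on the plateau, $T_r=1$ or $2$ in part 1, $T_r=2$ or $3$ in part 2, and the unconditional $T_{n/2}=2$ in part 3 are all right, and the resulting geometric series reproduce the three stated expressions. The paper takes the transposed route: it fixes $l$, sums the inner alternating geometric series in $j$ (ratio $-1/2$) to get $\frac{1}{3}2^{n-l+1}+\frac{1}{3}(-1)^{n-l}$ for each layer, and then sums over $l$ --- but it carries this out explicitly only for part 1 and dismisses parts 2 and 3 as ``routine simplifications.'' Your reorganization buys something real here: the single-sum form makes all three cases uniform, isolates the entire parity dependence in the one boundary term $T_r$, and actually supplies the verification of parts 2 and 3 that the paper omits; the cancellation you flag in case 3 ($T_{n/2}=2$ regardless of the parity of $n/2$) is precisely the reason that case has a parity-free answer, a point the paper never explains. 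The paper's version is marginally more direct for part 1 alone, but yours is the more complete argument.
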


\begin{proof}
When $r=n-1$, $k_1=\ldots=k_{n-2}=1$, $k_{n-1}=2$ by Equation \ref{eq4.6}. We have

\begin{equation*}
s^f=\sum_{l=1}^{r}k_lA_l^f=\frac{1}{2^{n-1}}\sum_{l=1}^{n-1} k_l\sum_{j=1}^{n-l}(-1)^{j-1}2^{n-\sum_{i=1}^{j+l-1}k_i}
\end{equation*}

\begin{equation*}
=\frac{1}{2^{n-1}}\sum_{l=1}^{n-1} k_l(\sum_{j=1}^{n-l-1}(-1)^{j-1}2^{n-j-l+1}+(-1)^{n-l-1})=\frac{1}{2^{n-1}}\sum_{l=1}^{n-1} k_l(\frac{1}{3}2^{n-l+1}+\frac{1}{3}(-1)^{n-l})
\end{equation*}

\begin{equation*}
=\frac{1}{2^{n-1}}(\sum_{l=1}^{n-2} (\frac{1}{3}2^{n-l+1}+\frac{1}{3}(-1)^{n-l})+2)=\frac{4}{3}-\frac{3+(-1)^n}{3\times 2^n}
\end{equation*}
 The other two formulas are also routine simplifications of Equation \ref{eq4.6}.
\end{proof}

Based on our numerical calculation, Lemma \ref{lm4.2} and the proof of Corollary \ref{co4.1}, We have the following

\begin{conj}\label{conj4.1}
The maximal value of $s^f$ is $s^f=\frac{4}{3}-\frac{3+(-1)^n}{3\times 2^n}$. It will be reached if the NCF has the maximal LAYER NUMBERS $n-1$, i.e., if $r=n-1$, $k_1=\ldots=k_{n-2}=1$, $k_{n-1}=2$. When $n$ is even, this maximal value is also reached by NCF with parameters 
$n\geq 4$, $r=n-2$, $k_1=\ldots=k_{n-3}=1$, $k_{n-2}=3$ or $n\geq 6 $, $r=\frac{n}{2}$, $k_1=1$, $k_2=\ldots=k_{\frac{n}{2}-1}=2$ and $k_{\frac{n}{2}}=3$.
\end{conj}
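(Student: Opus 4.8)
The plan is to recast the average-sensitivity formula of Theorem \ref{th4.2} as a low-dimensional dynamical recursion and then solve the resulting discrete optimization. Writing $h$ for the NCF on the $n-k_1$ variables of layers $M_2,\dots,M_r$ (so $h$ has layer number $r-1$), I would first establish the two transfer identities
\[
\beta(f)=2-2^{-k_1}\beta(h),\qquad s^f=2^{-k_1}\bigl(k_1\beta(h)+s^h\bigr),
\]
where $\beta(g):=W(g\oplus 1)/2^{m-1}$ for an $m$-variable NCF $g$. Both follow from Theorem \ref{th4.1} and Theorem \ref{th4.2}: since $f=M_1\cdot(h\oplus 1)$, Lemma \ref{lm4.1} gives the weight identity $W(f)=W(h\oplus 1)$, and the shift $A_l^f=2^{-k_1}A_{l-1}^h$ for $l\ge2$ together with $k_1A_1^f=2^{-k_1}k_1\beta(h)$ gives the sensitivity identity. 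Thus building an NCF layer by layer outward from the innermost monomial $M_r$ (the base state $\beta=2-2^{1-k_r}$, $s=k_r2^{1-k_r}$) amounts to iterating the affine map $T_k(\beta,s)=(2-2^{-k}\beta,\ 2^{-k}(k\beta+s))$, and the conjecture becomes the claim that, over all admissible layer sequences, the $s$-coordinate of $T_{k_1}\circ\cdots\circ T_{k_{r-1}}$ applied to the base is maximized at $k_1=\cdots=k_{r-1}=1$, $k_r=2$.

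Next I would run an exchange argument on this recursion. The decisive computation is that the refinement move replacing one inner operation $T_k$ ($k\ge2$) by $T_1\circ T_{k-1}$ changes the intermediate state by $\Delta s=1-2^{1-k}\beta$ and $\Delta\beta=-\Delta s$; since $1\le\beta<2$ always (because $1\le W(f)<2^{n-1}$ for any nonconstant NCF), one gets $\Delta s>0$ for every inner layer. The analogous base move $\mathrm{base}_{k_r}\mapsto T_1(\mathrm{base}_{k_r-1})$ likewise preserves $\beta+s$ and increases $s$ by $1-2^{2-k_r}\ge0$. Hence every move toward the all-ones configuration transfers weight from $\beta$ to $s$ while holding $\beta+s$ fixed, and the remaining task is to show that this never decreases the \emph{final} $s$-coordinate.

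Because each $T_k$ is affine, the final sensitivity is an affine functional $s_{\text{final}}=A\beta+Bs+C$ of any intermediate state, so a move contributes $(B-A)\,\Delta s$ to $s_{\text{final}}$. I would track $(B,A)$ by the backward recursion $B\mapsto 2^{-k}B$, $A\mapsto 2^{-k}(kB-A)$ started from $(B,A)=(1,0)$. Processing the layers from the outermost inward keeps the downstream block canonical (all $T_1$), and for a run of $j$ outer $T_1$'s one finds $B-A=0$ when $j$ is odd and $B-A>0$ when $j$ is even. I expect this parity to deliver both halves of the conjecture simultaneously: the moves strictly increase $s^f$ except in the odd-gap situations, which should be exactly the even-$n$ coincidences $(1,\dots,1,3)$ and $(1,2,\dots,2,3)$ of Lemma \ref{lm4.2}; the maximal value is then read off from Lemma \ref{lm4.2}(1). (As a sanity check, for $n=4$ this mechanism already reproduces the tie $s^f=5/4$ between $(1,3)$ and $(1,1,2)$.)

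The main obstacle is controlling the sign of $B-A$, i.e. the $\beta$-sensitivity $A=\partial s_{\text{final}}/\partial\beta$, when the downstream layers are \emph{not} yet canonical: a refining move trades $\beta$ for $s$, and raising $\beta$ at an intermediate stage can either help or hurt $s_{\text{final}}$ depending on the layers outside it, so $A$ is not sign-definite in general. This is precisely the tension that has kept the statement a conjecture. My proposed way around it is to fix the order of moves — outermost layer first, so that every move is evaluated against an already-canonical downstream block with the clean parity behaviour above — and then to verify that this greedy reduction is both admissible and monotone at each step. Making the index bookkeeping rigorous (new layers are inserted by each split, shifting positions), handling the base/monomial interaction, and settling the exact equality cases is where the genuine work lies.
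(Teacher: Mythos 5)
The first thing to say is that the statement you are proving is Conjecture~\ref{conj4.1}: the paper offers no proof of it at all, only the upper bound $s^f<2-2^{2-n}$ from Corollary~\ref{co4.1}, the three closed-form evaluations in Lemma~\ref{lm4.2}, and unpublished numerical evidence. So there is no ``paper's proof'' to match; any correct argument here would be new. Your framework itself checks out against the paper's formulas. The transfer identities $\beta(f)=2-2^{-k_1}\beta(h)$ and $s^f=2^{-k_1}(k_1\beta(h)+s^h)$ follow from Theorems~\ref{th4.1} and~\ref{th4.2} exactly as you say (the key points being $W(f)=W(h\oplus 1)$, $A_1^f=W(f)/2^{n-1}$ and $A_l^f=2^{-k_1}A_{l-1}^h$ for $l\ge 2$), the base state $(2-2^{1-k_r},\,k_r2^{1-k_r})$ is right, the split identities $\Delta s=-\Delta\beta=1-2^{1-k}\beta>0$ (inner) and $1-2^{2-k_r}\ge 0$ (base) are right given $1\le\beta<2$, and the backward coefficient recursion with $(B,A)=(1,0)$ does give $B-A=2^{-j}>0$ for an even number $j$ of downstream $T_1$'s and $B-A=0$ for odd $j$. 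I also confirmed that the mechanism reproduces the known ties: the $n=4$ tie between $(1,3)$ and $(1,1,2)$, the $n=6$ configuration $(1,2,1,2)$ of Remark~\ref{re4.2}, and the even-$n$ families of Lemma~\ref{lm4.2} all come out as exactly the chains in which every split sees an odd downstream count or has $\Delta s=0$.

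The gap is one of execution rather than of idea, but it is real: you explicitly defer the monotonicity of the greedy chain (``verify that this greedy reduction is both admissible and monotone at each step'') and the equality analysis, and these are precisely the content of the conjecture. To close it you need a clean induction of the following form: process layers outermost-first, and observe that when the $j$-th new $T_1$ is peeled off layer $l$, the downstream block consists of exactly $k_1+\cdots+k_{l-1}+j-1$ copies of $T_1$ (note the count is $\sum_{i<l}k_i$, not $l-1$), so the parity lemma applies verbatim and each split changes the final sensitivity by $(B-A)\Delta s\ge 0$; the base splits require $k_r\ge 3$ for admissibility, which makes their $\Delta s$ strictly positive. Assembled this way the argument appears to prove the first half of the conjecture, and the equality bookkeeping ($k_l\le 2$ for inner layers with $k_l=2$ only above an odd prefix sum, $k_r\in\{2,3\}$ with $k_r=3$ only for even $n$) would give a complete list of maximizers, which is strictly more than the conjecture asserts. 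Until that induction is actually written down --- in particular until you rule out any need to ever evaluate a split against a non-canonical downstream block, which is the scenario where $B-A$ can change sign --- the proposal remains a plausible strategy rather than a proof.
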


\begin{remark}\label{re4.2}
When $n=6$, the NCF with $k_1=1$, $k_2=2$, $k_3=1$ and $k_4=2$ also has the maximal average sensitivity $\frac{21}{16}$. But this can not be generalized. If the above conjecture is true, then we have $0<s^f<\frac{4}{3}$ for any NCF with arbitrary number of variables. In other words, both $0$ and $\frac{4}{3}$ are uniform tight bounds for any NCF.
\end{remark}

We point out, given the algebraic normal form of $f$, it is easy to find all of its canalyzing variables (the first layer $M_1$), then 
write $f=M_1g+b$, repeating the schedule to $g$, we can easily to determine if $f$ is NCF, if yes, we then write it as the form in Theorem \ref{th3.2}.

We end this section by the following example.
\begin{example}
Let $N(x_1,x_2,x_3,x_4)=x_1x_2x_3\oplus x_2x_3x_4\oplus x_1x_3\oplus x_3x_4\oplus 1$ and 

$Y(x_1,x_2,x_3,x_4,x_5)=x_1x_2x_3x_4x_5\oplus x_1x_2x_3x_4\oplus x_1x_2x_4x_5\oplus x_1x_2x_4\oplus x_1x_3x_4\oplus x_1x_3\oplus x_1x_4\oplus x_1$.

For $N(x_1,x_2,x_3,x_4)$, for all the $4$ variables, we found when $x_2=1$ or $x_3=0$, then functions becomes constant $1$, so we know 
$N(x_1,x_2,x_3,x_4)=(x_2\oplus 1)(x_3)N_1\oplus 1$. Actually, 

$N(x_1,x_2,x_3,x_4)=x_1x_2x_3\oplus x_2x_3x_4\oplus x_1x_3\oplus x_3x_4\oplus 1=x_3(x_1x_2\oplus x_2x_4\oplus x_1\oplus x_4)\oplus 1$

$=x_3(x_2(x_1\oplus x_4)\oplus x_1\oplus x_4)\oplus 1=x_3((x_2\oplus 1)(x_1\oplus x_4))\oplus 1$. Since $x_1\oplus x_4$ has no canalyzing variable, we know $N$ is not NCF, but a partially NCF.

For $Y(x_1,x_2,x_3,x_4,x_5)$, We find $x_1=0$ or $x_3=1$, the function will be reduced to $0$, so we know $Y=x_1(x_3\oplus 1)Y_1$. 

Where $Y_1=x_2x_4x_5\oplus x_2x_4\oplus x_4\oplus 1$, for this function we find only when $x_4=0$, $Y_1$ will be reduced to $1$, so $Y_1=x_4Y_2\oplus 1$, where $Y_2=x_2x_5\oplus x_2\oplus 1$, and finally, we have $Y_2=x_2(x_5\oplus 1)\oplus 1$, So $Y$ is NCF with $n=5$, $r=3$ and $k_1=2$, $k_2=1$, $k_3=2$, $M_1=x_1(x_3\oplus 1)$, $M_2=x_4$ and $M_3=x_2(x_5\oplus 1)$, hence its Hamming weight is $5$ by Equation \ref{eq4.3} and its average sensitivity is $\frac{15}{16}$by Equation \ref{eq4.6}.

\end{example}

\section{Conclusion}

We obtain a complete characterization for nested canalyzing functions
(NCFs) by deriving its unique algebraic normal form (polynomial form). We
introduced a new invariant, LAYER NUMBER for nested canalyzing function. So,
the dominance of nested canalyzing variables is quantified. Consequently, we
obtain the explicit formula of the number of nested canalyzing functions. Based on the
polynomial form, we also obtain the formula of the Hamming weight of each NCF. 
The activity number of each variable of a NCF is also provided with an explicit formula. Consequently, we proved the average sensitivity
of any NCF is less than $2$, hence, we proved why NCF is stable theoretically. Finally, we conjecture that the tight upper bound for the average sensitivity of any NCF is $\frac{4}{3}$.

\end{document}